% This is samplepaper.tex, a sample chapter demonstrating the
% LLNCS macro package for Springer Computer Science proceedings;
% Version 2.20 of 2017/10/04
%
\documentclass[runningheads]{llncs}

\usepackage{amsmath}
\usepackage{amssymb}
\usepackage[utf8x]{inputenc}
\usepackage{listings}
\usepackage{color}
\usepackage{epsfig}
\usepackage{url}
\usepackage{amsmath}
\usepackage{cite}
\usepackage{graphicx}
\usepackage{color}
\usepackage{amsmath,dsfont}
\usepackage{epstopdf}
\usepackage{lipsum}
\usepackage{romannum}
\usepackage[ruled,linesnumbered, noend]{algorithm2e}
\usepackage{pgfplots}
\pgfplotsset{compat=1.13}
\usepackage{tikz}
\usepackage{graphicx}
\usepackage{breqn}

\usepackage{graphics}

\usepackage{csquotes}
\usepackage{amsmath}
\usepackage{subcaption}
\usepackage{graphicx}
\usepackage{accents}
\usepackage{mathtools}
\usepackage[normalem]{ulem}
\usepackage{dirtytalk}
\usepackage{mathrsfs}
\usepackage{amsmath}

\newcommand{\isdef}{\ensuremath{\overset{\scriptsize\textit{def}}{=}}}
\newcommand{\lp}{\ensuremath{\left(}}
\newcommand{\rp}{\ensuremath{\right)}}
\newcommand{\lb}{\ensuremath{\left[}}
\newcommand{\rb}{\ensuremath{\right]}}
\newcommand{\lc}{\ensuremath{\left\{}}
\newcommand{\rc}{\ensuremath{\right\}}}

\newcommand{\Nats}{\mathbb{N}}

\renewcommand{\P}{\mathds{P}}
\newcommand{\E}{\mathds{E}}

\newcommand{\ind}[1]{\mathds{1}_{#1}}

\newcommand{\rtot}{r^{\scriptsize\textrm{tot}}}
\newcommand{\ltot}{\ell^{\scriptsize\textrm{tot}}}

\newcommand{\sbbOne}{\textrm{SBB}}
\newcommand{\sbbTwo}{\textrm{S}^2\textrm{BB}}
\newcommand{\sbbThree}{\textrm{S}^3\textrm{BB}}

\newcommand{\ep}{\varepsilon^{\scriptsize\textrm{thm2}}}
\newcommand{\epdkw}{\varepsilon^{\scriptsize\textrm{dkw}}}

\usepackage{graphicx}
% Used for displaying a sample figure. If possible, figure files should
% be included in EPS format.
%
% If you use the hyperref package, please uncomment the following line
% to display URLs in blue roman font according to Springer's eBook style:
% \renewcommand\UrlFont{\color{blue}\rmfamily}

\begin{document}
\pagenumbering{arabic}
\pagestyle{plain}
\title{Quasi-Deterministic Burstiness Bound for Aggregate of Independent, Periodic Flows}
%
%\titlerunning{Abbreviated paper title}
% If the paper title is too long for the running head, you can set
% an abbreviated paper title here
%
% \author{Seyed Mohammadhossein Tabatabaee\inst{1}\orcidID{0000-1111-2222-3333} \and
% Anne Bouillard\inst{2}\orcidID{1111-2222-3333-4444} \and
% Jean-Yves Le Boudec\inst{3}\orcidID{2222--3333-4444-5555}}
\author{Seyed Mohammadhossein Tabatabaee\inst{1} \and
Anne Bouillard\inst{2}\and
Jean-Yves Le Boudec\inst{3}}
% \authorrunning{F. Author et al.}
% % First names are abbreviated in the running head.
% % If there are more than two authors, 'et al.' is used.
\institute{EPFL, Lausanne, Switzerland, \email{hossein.tabatabaee@epfl.ch} \and
Huawei Technologies France, Paris, France, \email{anne.bouillard@huawei.com} \and
EPFL, Lausanne, Switzerland, \email{jean-yves.leboudec@epfl.ch}
}
\maketitle              % typeset the header of the contribution

\begin{abstract}
Time-sensitive networks require timely and accurate monitoring of the status of the network. To achieve this, many devices send packets periodically, which are then aggregated and forwarded to the controller. 
Bounding the aggregate burstiness of the traffic is then crucial for effective resource management. In this paper, we are interested in bounding this aggregate burstiness for independent and periodic flows. A deterministic bound is tight only when flows are perfectly synchronized, which is highly unlikely in practice and would be overly pessimistic. We compute the probability that the aggregate burstiness exceeds some value. When all flows have the same period and packet size, we obtain a closed-form bound using the Dvoretzky–Kiefer–Wolfowitz inequality. In the heterogeneous case, we group flows and combine the bounds obtained for each group using the convolution bound. Our bounds are numerically close to simulations and thus fairly tight. The resulting aggregate burstiness estimated for a non-zero violation probability is considerably smaller than the deterministic one: it grows in $\sqrt{n\log{n}}$, instead of $n$, where $n$ is the number of flows.

\end{abstract}
\section{Introduction} \label{sec:intro}

The development of industrial automation requires timely and accurate monitoring of the status of the network. In time-sensitive networks, a common assumption for critical types of traffic is that devices send packets periodically. These packets are aggregated and forwarded to the controller.  Characterizing this aggregate traffic is then crucial for effective resource management.  

Among the analytic tools providing analysis for real-time systems is deterministic network calculus~\cite{le_boudec_network_2001,bouillard_deterministic_2018}. From the characterization of the flows, the description of the switches (offered bandwidth and scheduling policy), it can derive worst-case performance bounds, such as end-to-end delay or buffer occupancy. These performances can grow linearly  with the burstiness of the flows~\cite{BN15}. Hence, accurately bounding the burstiness is key for performance evaluation and resource management. However, deterministic network calculus takes into account the worst-case scenario for aggregation of flows, which happens when flows are perfectly synchronized, and this is very unlikely to happen. 

To overcome this issue, probabilistic versions of network calculus (known as {\em Stochastic Network Calculus}) have emerged, and their aim is to compute performances when a small {\em violation probability} is allowed. Using probabilistic tools such as moment-generating functions~\cite{guide_snc} or martingales~\cite{PC14}, recent works mainly focus on ergodic systems and on the performances at an arbitrary point in time. This does not imply that the probability that the delay bound is never violated during a period of interest is small. 
Moreover, results are very limited in terms of topology and service policies, and become inaccurate for multiple servers~\cite{BNS22a}. 
Methods that compute probabilistic bounds on the burstiness have been discarded as they do not provide as good results for ergodic systems~\cite{sbb}. 

In this paper, we focus on the burstiness of the aggregation of periodic and independent flows. In other words, each flow sends packets periodically with a fixed packet size and period. The first packet  is sent at a random time (the phase) within that period. We assume phases are mutually independent. 
Our aim is to find a probabilistic bound on the burstiness of the aggregation of flows, that is, finding a burst that is valid {\em at all times} with large probability. That way, combining these probabilistic burstiness bounds with results of deterministic network calculus lead to delay and backlog bounds that are valid with large probability {\em at all times}, hence the name {\em quasi-deterministic}. To our knowledge, this is the first method to obtain quasi-deterministic bounds for independent, periodic flows.

Our contributions are the following: 
\begin{enumerate}
	\item %
 % First, in the homogeneous setting, where all flows have the same period and packet size, we provide two probabilistic bounds for the aggregate burstiness. Both of them are based on bounding the probability of some event $E$ relying on the order statistics of the phases. The former (Theorem~\ref{thm:homog_dkw}) has a closed form; it uses the Dvoretzky–Kiefer–Wolfowitz (DKW) inequality \cite{massart1990tight} to bound the probability of $E$, and allows to show that is a small positive violation probability is allowed, the burstiness grows in $O(\sqrt{n\log{n}})$, where $n$ is the number of flows, instead of $O(n)$ for a deterministic bound. The latter (Theorem~\ref{thm:homog}) directly computes directly the probability of $E$, which can be implemented iteratively. 
    First, in the homogeneous setting, where all flows have the same period and packet size, we provide two probabilistic bounds for the aggregate burstiness. Both of them are based on bounding the probability of some event $E$ relying on the order statistics of the phases. The former (Theorem~\ref{thm:homog_dkw}) has a closed form; it uses the Dvoretzky–Kiefer–Wolfowitz (DKW) inequality \cite{massart1990tight} to bound the probability of $E$, and when a small positive violation probability is allowed, the burstiness grows in $O(\sqrt{n\log{n}})$, where $n$ is the number of flows, instead of $O(n)$ for a deterministic bound. The latter (Theorem~\ref{thm:homog}) directly computes the probability of $E$, which can be implemented iteratively. 
	\item Second, we focus on two types of heterogeneity: either flows can be grouped into several homogeneous sub-sets, and we use a bounding technique based on convolution (Theorem~\ref{thm:hetero}), or flows have the same period but different packet sizes, and the bounds can be adapted from the homogeneous case (Theorem~\ref{thm:semi_homog}).
 \item Last, we numerically show that our bounds are close to simulations. The quasi-deterministic aggregate burstiness we obtain with a small, non-zero violation tolerance is considerably smaller than the deterministic one. For the heterogeneous case, we show that our convolution bounding technique provides bounds significantly  smaller than those obtained by the union bound. 
\end{enumerate}
The rest of the paper is organized as follows: We present our model in Section~\ref{sec:assump}, and then in Section~\ref{sec:realted_works} some results from state of the art. Our contributions are detailed in Section~\ref{sec:homogen} for the homogeneous case and Section~\ref{sec:heterogen} for the heterogeneous case. Finally, we provide some simulation results in Section~\ref{sec:num} to demonstrate the tightness of the bounds.

\section{Assumptions and Problem Statement} \label{sec:assump_prob}
% \hossein{
	% \begin{enumerate}
		%     \item Assumptions
		%     \item Discussion on stationarity.
		%     \item Discussion on ergodicity.
		%     \item Problem Statement
		% \end{enumerate}}
We use the notation $\Nats = \{0, 1, \ldots\}$ and $\Nats_n = \{1, \ldots, n\}$.
\paragraph{Assumptions}\label{sec:assump}

% We assume the phases are random and uniformly distributed in $[0,\tau_f]$, and that $(\phi_f)_{f\in \Nats_n}$ is a family of  mutually independent random variables. 

% We pick a time origin and let $\phi_f$ (phase) denote the emission time of the first packet. 

We consider $n$ periodic flows of packets. Each flow $f \in \Nats_n$  is periodic with period $\tau_f$ and phase $\phi_f \in [0, \tau_f)$, and sends packets  of size $\ell_f$: the number of bits of flow $f$ arriving in the time interval $[0, t)$ is $\ell_f\lceil [t-\phi_f]^+ / \tau_f\rceil$ where we use the notation $[x]^+ = \max(0,x)$ and $\lceil \rceil$ denotes the ceiling.

For every flow $f$, we assume that $\phi_f$ is random, uniformly distributed in $[0,\tau_f]$, and that the different $(\phi_f)_{f\in \Nats_n}$ are independent random variables.

% \begin{figure}[htbp]
	% \centering
	%     \scalebox{1}{\input{Figures/arrivals}}
	% 	\caption{\sffamily \small Traffic of flow $f$: Starting from phase $\phi_f$, flow $f$ periodically sends packets of constant size $l_f$  with period $\tau_f$. The traffic is constrained by a deterministic token-bucket arrival curve $\gamma_{r_f,l_f}$ defined for $t> 0$ by $\gamma_{r_f,l_f}(t)=l_f + r_ft$ and for $t=  0$ by $\gamma_{r_f,l_f}(0)=0$; note that $r_f = \frac{l_f}{\tau_f}$.}
	% 	\label{fig:periodic}
	% \end{figure}

\paragraph{Problem Statement} \label{sec:prob_state}
We consider the aggregation of the $n$ flows and let $A[s,t)$ denote the number of bits observed in time interval $[s,t)$. Our goal is to find a token-bucket arrival curve constraining this aggregate, that is, a rate $r$ and a burst $b$ such that $\forall s \leq t,~ A[s,t) \leq r(t - s) + b$. It follows from the assumptions that each individual flow $f\in\Nats_n$ is constrained  by a token-bucket arrival curve with rate $r_f= \ell_f / \tau_f$ and burst $\ell_f$.
%
% We say a flow is constrained by a token-bucket arrival curve with rate $r$ and burst $b$ if $\forall s \leq t,~ A(s,t] \leq r(t - s) + b$ where we let $A(s,t]$ denote the number of bits observed in time interval $(s,t]$. 
% the number of bits observed for this flow for any $s \leq t$ is upper bounded by $r(t-s) + b$, i.e., $\forall s \leq t,~ A(s,t] \leq r(t - s) + b$ where we let $A(s,t]$ denote the number of bits observed in time interval $(s,t]$. 
% % This definition is a deterministic constraint on the flow. 
% It follows, from the assumptions, that each flow $f=1:n$ is constrained  by a token-bucket arrival curve with rate $r_f= \frac{l_f}{\tau_f}$ and burst $l_f$.
% defined by $\gamma_{r_f,l_f}(t) = r_ft+l_f$ for $t > 0$ and  $\gamma_{r_f,l_f}(t) = 0$ for $t =0$ with $r_f = \frac{l_f}{\tau_f}$. 
% Also, the aggregate traffic has a deterministic, token-bucket arrival curve $\gamma_{\rtot,\ltot}$ with $\rtot = \sum_{f = 1}^Fr_f$ and $\ltot = \sum_{f = 1}^Fl_f$. 
Therefore, the aggregate flow is constrained by a token-bucket arrival curve with rate $\rtot = \sum_{f = 1}^nr_f$ and burst $\ltot = \sum_{f = 1}^n\ell_f$.
% , where $\ltot $ is the deterministic bound on the aggregate burstiness and is achieved when all phases are identical, i.e., when all flows are perfectly synchronized; this is highly unlikely in practice. In fact, the aggregate burstiness would likely be smaller when flows are not synchronized.
% In contrast, we assume that phases are independent, and we have no other information on them, which corresponds to the maximum entropy, thus, we assume uniform distribution.

However, due to the randomness of the phases, $\ltot$ might be larger than what is observed, and we are rather interested in token-bucket arrival curves  with rate $\rtot$ and a burst $b$ valid with some probability; specifically, we want to find a bound on the tail probability of the aggregate burstiness, which is defined as the smallest value of $B$ such that the aggregate flow is constrained by a token-bucket arrival curve with rate $\rtot$ and burst $B$, for the entire network lifetime. 
% bound the probability that the aggregate burstinss exceeds some value $b$ with a small probability $\epsilon(b)$, which we call a bound on the tail probability of the aggregate burstiness.
%
%We first  introduce an intermediate quantity $\bar{B}(t)$, which is the token-bucket content at time $t$ for a token-bucket that is initially empty, and is given by
% the aggregate burstiness $\bar{B}(t)$ at time $t$, which is defined as the smallest value of the burst such that the aggregate traffic is constrained by a token-bucket arrival curve with rate $\rtot$ and burst $\bar{B}(t)$, when observing the aggregate traffic up to time $t$. The value of  $\bar{B}(t)$ is given by
% We let $B(t)$ denote the aggregate burstiness at time $t$, and is given by 
%\begin{equation} \label{eq:agg_burst_t}
%	\bar{B}(t) = \sup_{s \leq t} \{ A[s,t) - \rtot(t - s) \}.
%\end{equation}
% where $A(s,t]$ is the number of bits observed in time interval $(s,t]$ for the aggregate traffic. 
The aggregate burstiness is given by 
\begin{equation}\label{eq:agg_burst}
	B = \sup_{t \geq 0}\Bar{B}(t).
\end{equation}
% and $B$ is the smallest value of the burst such that the aggregate traffic is constrained by a token-bucket arrival curve with rate $\rtot$ and burst $B$ throughout the entire network's lifetime. 
where
$\bar{B}(t)$ is the token-bucket content at time $t$ for a token-bucket that is initially empty, and is given by
% the aggregate burstiness $\bar{B}(t)$ at time $t$, which is defined as the smallest value of the burst such that the aggregate traffic is constrained by a token-bucket arrival curve with rate $\rtot$ and burst $\bar{B}(t)$, when observing the aggregate traffic up to time $t$. The value of  $\bar{B}(t)$ is given by
% We let $B(t)$ denote the aggregate burstiness at time $t$, and is given by 
\begin{equation} \label{eq:agg_burst_t}
	\bar{B}(t) = \sup_{s \leq t} \{ A[s,t) - \rtot(t - s) \}.
\end{equation}

Note that $B$ is a function of the random phases of the flows, therefore, is also random. Assume that $ \P(B > b) = \varepsilon$; this means that, with probability $1 - \varepsilon$, after periodic flows started, the aggregate burstiness is $\leq b$. Conversely, with probability $\varepsilon$, the aggregate burstiness is $> b$.

Observe that $ \P(B > b) =0$ for all $b \geq \ltot$, as $\ltot$ is a deterministic bound on the aggregate burstiness. Then, for some pre-specified value $0 \leq b < \ltot$, our problem is equivalent to finding $\epsilon(b) $ that bounds the tail probability of the aggregate burstiness $B$, i.e.,
\begin{equation} \label{eq:prob}
	\P(B > b) \leq \epsilon(b).
\end{equation}

\section{Background and Related Works} \label{sec:realted_works}

Bounding the burstiness of flows in Network Calculus is an important problem since it has a strong influence on the delay and backlog bounds. 
The deterministic aggregate burstiness can be improved (compared with summing burstiness of all flows)  when the phases of the flows are known exactly~\cite{DB16}. 

Regarding the \emph{stochastically bounded burstiness}~\cite{sbb, snc_basics}, three models have been proposed,
%The definition of a deterministic constraint on the flow is defined in Section~\ref{sec:assump_prob} and always hold. Alternatively, one can define an arrival constraint that holds with some probability. 
depending on how quantifiers are used, %called \emph{stochastically bounded burstiness}, \cite{sbb, snc_basics}, which we express here for token-bucket constraints: 
\begin{align}
	\label{eq:sbbOne}
	&\sbbOne: \forall 0 \leq s \leq t,~ \P\lp A[s, t) - r(t-s) > b\rp \leq \epsilon(b), \\
%	\label{eq:sbbTwo}
%	&\sbbTwo: \forall t \geq 0,~ \P ( \underbrace{\sup_{0 \leq s \leq t}\{A(t,s] - r(t-s)\}}_{\bar{B}(t)}  > b) \leq \epsilon(b) \\
%	\label{eq:sbbThree}
%	&\sbbThree: \P( \underbrace{\sup_{t\geq 0}\bar{B}(t)}_{B}  > b ) \leq \epsilon(b) 
	\label{eq:sbbTwo}
&\sbbTwo: \forall t \geq 0,~ \P (\sup_{0 \leq s \leq t}\{A[s, t) - r(t-s)\} > b) \leq \epsilon(b), \\
\label{eq:sbbThree}
&\sbbThree: \P(\sup_{t\geq 0} \{ \sup_{0 \leq s \leq t}\{  A[s, t) - r(t-s)\} \}  > b ) \leq \epsilon(b). 
	% &\sbbThree: \P( \sup_{t\geq 0}\bar{B}(t) = B  > b ) = \P( \exists t\geq 0, ~ B(t)  > b ) \leq \epsilon(b) 
\end{align}
First, notice that $\sbbThree \implies \sbbTwo \implies \sbbOne$. Indeed, $\sbbOne$ is a probability upper bound that the arrival curve constraint is invalid for a fixed pair of times $s\leq t$.   
In contrast, $\sbbTwo$ is the probability that token-bucket content at time $t$, $\bar{B}(t)$ exceeds $b$, hence the ``$\forall s$'' appearing inside the probability. Last, $\sbbThree$ represents the violation probability of the aggregate burstiness $B$ of the whole process. A deterministic arrival curve is a special case of $\sbbThree$, with $\epsilon(b) = 0$, which is why, for a non-zero violation probability $\epsilon(b)$, $b$ is called a \emph{quasi-deterministic} bound on the burstiness.

%\hossein{
%The first definition, $\sbbOne$, requires that, for a fixed pair of measurement instant $s \leq t$, the deterministic arrival curve constraint with rate $r$ and burst $b$ is satisfied with probability at least $1 - \epsilon(b)$, and this should hold for any $s \leq t$. The second definition, $\sbbTwo$, requires that for a fixed time $t$, with probability at least $1 - \epsilon(b)$, the deterministic arrival curve with rate $r$ and burst $b$ is satisfied for the entire interval $[0, t]$, and this should hold for any $t \geq 0$.
% upper bounds the probability that at a time $t$, the burstiness $B(t)$ exceeds $b$ (see \eqref{eq:agg_burst} for the definition of $B(t)$). 
%Lastly, $\sbbThree$, requires that, with probability at least $1 - \epsilon(b)$, the deterministic arrival curve with rate $r$ and burst $b$ is always satisfied.}

% upper bounds  the probability that there exists a time $t$ such that the burstiness $B(t)$ exceeds $b$.
%\hossein{
%Observe that $\sbbThree$ implies $\sbbTwo$  and $\sbbTwo$  implies $\sbbOne$, hence $\sbbThree$ is the strongest. Also,  a deterministic arrival curve is a special case of $\sbbThree$ where $\epsilon(b) = 0$. This is why, for a non-zero violation probability $\epsilon(b)$, $b$ is called a \emph{quasi-deterministic} bound on the burstiness.}

The first model $\sbbOne$ is the weakest, but also the easiest to handle: bounding the arrivals during a given  interval of time can be done for many stochastic models. It was also used for the study of aggregated independent flows with periodic patterns~\cite{milan_etal_backlog,chang_etal,guide_snc,sbb,stochastic_leaky_bucket_regulated_flows,kesidis_buffer}. All the approaches can be summarized as follows: a) defining an event $E_{s}$ of interest related to some time interval $[s, t)$ and aggregation of the flows; b) combining the events $(E_s)_{s\leq t}$ together to obtain a violation probability of the burstiness or of the backlog bound at time $t$. 

% \jylb{we should mention that S3BB $\Rightarrow$ S2BB $\Rightarrow$ SBB.}

The second model $\sbbTwo$ seems at first more adapted to network calculus analysis, as performance bounds can be directly derived from the formulation. However, the probability bound of $\sbbTwo$ is usually deduced from $\sbbOne$, which leads to pessimistic bounds for a single server. Nevertheless, this framework may become necessary for more complex cases~\cite{CBL06}.

In time-sensitive networks, we are interested in the probability that a delay or backlog bound is not violated during some interval (e.g., the network's lifetime), not just one arbitrary point in time, so the two models $\sbbOne$ and $\sbbTwo$ are not adapted, as they do not provide the violation probability of a delay bound  during a whole period of interest. 
In contrast, when using $\sbbThree$, we can guarantee, with some probability, that delay and backlog bounds derived by deterministic network calculus are never violated during the network's lifetime, which is why we choose this formulation in our model. 

% As explained in Section~\ref{sec:assump_prob},  we are interested in token-bucket arrival curves of type $\sbbThree$ rather than $\sbbTwo$: In time-sensitive networks, we are interested in the probability that a delay or backlog bound is not violated during some interval (e.g., the network's lifetime), not just one arbitrary point in time. however, bounds obtained by $\sbbTwo$ do not provide such information; the violation probability of a delay bound being small at one arbitrary point in time, does not imply that the probability that the delay bound is never violated during a period of interest is small. In fact, there would likely be some violations. In contrast, we can guarantee that delay bounds derived by deterministic network calculus, when using $\sbbThree$, would never be violated with some probability during the network's lifetime.

 As pointed out in~\cite[Section 4.4]{sbb}, when arrival processes are stationary and ergodic, $\sbbThree$ is always trivial and the bounding function $\epsilon(b)$ in \eqref{eq:sbbThree} is either zero or one. This is perhaps why the literature was discouraged from studying $\sbbThree$ characterizations. However, it has been overlooked that there is interest in some non-ergodic arrival processes, as in our case. Indeed, with our model, phases $\phi_f$ are drawn randomly but remain the same during the entire period of interest; thus, our arrival processes are not ergodic.

\section{Homogeneous Case} \label{sec:homogen}
% \section{Homogeneous Case: Flows with the Same Period and Same Packet-size} \label{sec:homogen}

In this section, we consider the case where flows have  the same packet size and same period. 

%\jylb{I would prefer to give the general result with arbitrary $\tau$ and $\ell$ and leave the simplification $\tau=\ell=1$ to the proof, for 2 reasons. First, it seems weird to have $\tau=\ell$ as the units are not the same. Second, and more important, we want our result to be used, and this simplifies the usage of our result.}

More precisely, we assume 
\begin{itemize}
	%\item[${\bf(H)}$] $\forall f\in\Nats_n,~\ell_f = \tau_f = 1$ and $(\phi_f)_{f\in\Nats_n}$ is a family of independent and identically distributed (iid)  uniform random variables (rv)  on $[0, 1)$.  
 \item[${\bf(H)}$] There exist $\tau, \ell>0$ such that $\forall f\in\Nats_n,~\ell_f = \ell, \tau_f=\tau$ and $(\phi_f)_{f\in\Nats_n}$ is a family of independent and identically distributed (iid)  uniform random variables (rv) on $[0, \tau)$.
\end{itemize}

We present two bounds for the aggregate burstiness; the former gives a closed form, unlike the latter, which might be slightly more accurate when the number of flows is small.

Let us first prove a useful result when the period $\tau$ is equal to $1$; it shows that if the time origin is shifted to the arrival time of the first packet of flow $i$, the phases of the $n-1$ other flows remain uniformly distributed on $[0, 1)$ and mutually independent. For this, we define the function $h$ as $\forall x, y\in[0, 1)$, 
\begin{equation}
	\label{eq:h}
	h(x, y) =  (x - y)\ind{x \geq y} + (1+x - y)\ind{x < y}.
\end{equation} 
Intuitively, if $x = \phi_j$ and $y = \phi_i$, $h(x, y)$ is the time until the arrival of the first packet of flow $j$, counted from the arrival time of the first packet of flow $i$.

\begin{lemma}\label{lem:uniform}
	Let $U_1,\ldots, U_{n}$ be a sequence of $n$ iid uniform rv on $[0, 1)$. 
	Let $i \in\Nats_n$ and	define $W_j$ for $j\in\Nats_{n} \setminus\{i\}$ by $W_j = h(U_j, U_i)$. 
%	\begin{equation}
%		W_j = (U_j - U_i)\ind{U_j \geq U_i} + (1+U_j - U_i)\ind{U_i<U_j}.
%	\end{equation}        
	% as follow: $j=1:(n-i),~ W_j = U_{(j+i)} - U_{(i)}$ and $ j=(n-i+1):(n-1),~ W_j = 1 + U_{(j - (n-i))} - U_{(i)}$. 
	Then, $(W_j)_{j\neq i}$ is a family of $n-1$ iid uniform rv on $[0, 1)$.
\end{lemma}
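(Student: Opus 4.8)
The plan is to show that the random vector $(W_j)_{j \neq i}$ has the correct joint distribution by conditioning on the value of $U_i$ and exploiting the rotational (circular) symmetry of the uniform distribution on $[0,1)$. The key observation is that $h(x, y)$ computes $(x - y) \bmod 1$, i.e., the clockwise distance from $y$ to $x$ on the circle $\mathbb{R}/\mathbb{Z}$. So $W_j$ is obtained by taking $U_j$, viewing it as a point on the unit circle, and rotating by $-U_i$. The heart of the argument is the standard fact that a uniform random variable on the circle is invariant under any fixed rotation.

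\medskip

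\emph{First}, I would condition on $U_i = y$ for a fixed $y \in [0,1)$. Conditionally on this event, the remaining variables $(U_j)_{j \neq i}$ are still iid uniform on $[0,1)$ and are independent of $U_i$ (by the independence assumption). Under this conditioning, each $W_j = h(U_j, y)$ is a deterministic function of $U_j$ alone, and the map $x \mapsto h(x, y)$ is a bijection from $[0,1)$ onto $[0,1)$: it sends $[y, 1)$ to $[0, 1-y)$ via $x \mapsto x - y$, and $[0, y)$ to $[1-y, 1)$ via $x \mapsto 1 + x - y$, i.e., it is the circular rotation by $-y$. \emph{Second}, I would verify that this rotation pushes the uniform law on $[0,1)$ forward to itself: for any Borel set, the preimage under a circular rotation has the same Lebesgue measure, so $W_j \mid U_i = y$ is again uniform on $[0,1)$. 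The cleanest way to state this is to compute the CDF $\P(W_j \leq w \mid U_i = y) = w$ for $w \in [0,1)$, splitting into the two cases according to whether $w \leq 1-y$ or $w > 1-y$.

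\medskip

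\emph{Third}, since the conditional law of each $W_j$ given $U_i = y$ does not depend on $y$, the conditional and unconditional laws coincide, and moreover $(W_j)_{j \neq i}$ is independent of $U_i$. \emph{Finally}, for the joint independence of the $W_j$ among themselves: conditionally on $U_i = y$, the $W_j = h(U_j, y)$ are functions of the mutually independent variables $(U_j)_{j \neq i}$, hence are conditionally independent; combined with the fact that each conditional marginal is uniform and independent of $y$, this yields that $(W_j)_{j \neq i}$ is an iid uniform family unconditionally. One applies the factorization of the joint conditional CDF and integrates out $y$.

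\medskip

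\textbf{Main obstacle.} The computation itself is routine; the only real subtlety is bookkeeping the two-case structure of $h$ and making sure the rotation argument is applied at the level of the full joint distribution rather than just the marginals. The point requiring care is to argue simultaneously that the $W_j$ are jointly independent \emph{and} uniform \emph{and} independent of $U_i$ — the conditioning-on-$U_i$ device handles all three at once, provided one notes that the conditional joint law is a product of identical uniform laws regardless of the conditioning value $y$.
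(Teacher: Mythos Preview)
Your proposal is correct and follows essentially the same approach as the paper: condition on $U_i=y$, use that $x\mapsto h(x,y)$ is a circular shift preserving the uniform law so each $W_j$ is uniform given $U_i=y$, exploit the conditional independence of the $(U_j)_{j\neq i}$ to factorize the joint law, and integrate out $y$. The paper packages this via $\E\bigl[\prod_{j\neq i}g_j(W_j)\bigr]=\prod_{j\neq i}\E[g_j(V_j)]$ for arbitrary bounded measurable test functions $g_j$, whereas you phrase it through rotation invariance and conditional CDFs, but the substance is identical.
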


\begin{proof}
	Let us first do a preliminary computation for all $u_i\in [0, 1]$ and all bounded measurable function $g_j$: 
	\begin{align*}
		\E[g_j(h(U_j, u_i))] = \int_{u_j = 0}^1 \hspace{-0.2cm}g_j(h(u_j, u_i))du_j & = \int_{u_j = u_i}^1 \hspace{-0.2cm}g_j(u_j - u_i)du_j + \int_{u_j = 0}^{u_i}\hspace{-0.2cm} g_j(1 + u_j - u_i)du_j\\ = \int_{u_j = 0}^{1-u_i} g_j(w_j)dw_j &+ \int_{u_j=1-u_i}^1 g_j(w_j) dw_j = \int_{w_j = 0}^1 g_j(w_j)dw_j.
	\end{align*}
 Then, consider a collection of bounded measurable functions $(g_j)_{j\neq i}$: we can compute $\E[\prod_{j\neq i} g_j(W_j)] = \E[\prod_{j\neq i} g_j(h(U_j, U_i))] = \int_{u_i=0}^1 \E[\prod_{j\neq i} g_j(h(U_j ,u_i))] du_i  = \int_{0}^1 \prod_{j\neq i}\E[ g_j(h(U_j , u_i))] du_i = \int_{0}^1 \prod_{j\neq i}\E[ g_j(V_j)] du_i =  \prod_{j\neq i} \E[g_j(V_j)]=  \E[\prod_{j\neq i} g_j(V_j)]$, where $(V_j)_{j\neq j}$ is a collection of $n-1$ iid uniformly rv on $[0, 1)$. \qed
\end{proof}

The bounds we are to present are based on the order statistics: consider $n-1$ rv $U_1, \ldots, U_{n-1}$ and its order statistics is $U_{(1)}\leq  \cdots \leq U_{(n-1)}$, defined by sorting $U_1, \ldots, U_{n-1}$ in non-decreasing order. It is well-known \cite[Equation 1.145]{gentle2009computational} that if $(U_i)$ is an iid family of uniform rv on [0,1], the density function of the joint distribution of $U_{(1)},\ldots,U_{(n-1)}$ is
% \cite[Equation 1.145]{gentle2009computational}
\begin{equation}\label{eq:orderUniform}
	f_{U_{(1)},\ldots,U_{(n-1)}}( y_{1},\ldots,y_{n-1} )  = (n-1)!\ind{0 \leq y_{1}\leq y_2 \leq  \ldots \leq y_{n-1} \leq 1}.
\end{equation}

The next proposition connects the order statistics of the phases with the aggregate burstiness, and is key for Theorems~\ref{thm:homog_dkw} and~\ref{thm:homog}.

\begin{proposition}\label{prop:main}
	Assume model ${\bf (H).}$  For all $0\leq b < n \ell $, %Let $U_1, ..., U_{n-1}$ be a sequence of $n-1$ independent and identically distributed (iid) standard uniform random variables, and $U_{(1)},...,U_{(n-1)}$ be the corresponding order statistic, arranged in non-decreasing order. Then,
	\begin{align} \label{eq:BiT}
		% \forall i \in \{1,\ldots,n\}, 
		\P ( B > b) \leq n\P (E),
		% \lc B_i \leq b \rc = \bigcap_{k = 1}^{n-1} \lc U_{(k)} \geq \frac{ \lb (k + 1)l - b \rb^+}{\ltot} \rc, \;\;\; i = 1:n
	\end{align}
	with
	\begin{equation}
		E \isdef \bigcup_{k=\lfloor b / \ell\rfloor}^{n-1} 
		\lc U_{(k)} < \frac{(k+1)-b / \ell}{n}\rc,
		% E = \bigcup_{k=\lfloor \bar{b}\rfloor}^{n-1} 
		% \lc U_{(k)} < \frac{(k+1)l-b}{\ltot}\rc
		% = \bigcup_{k=\lfloor b\rfloor}^{n-1} 
		% \lc U_{(k)} < \frac{(k+1)-\Bar{b}}{n}\rc 
	\end{equation}
where $U_{(1)},\ldots,U_{(n-1)}$ is the order statistic of $n-1$ iid uniform rv on $[0,1]$.
\end{proposition}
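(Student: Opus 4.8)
The plan is to reduce the time-supremum defining $B$ in~\eqref{eq:agg_burst}--\eqref{eq:agg_burst_t} to a finite maximum over the order statistics of the phases, and then apply a union bound over the $n$ flows together with Lemma~\ref{lem:uniform}. By scaling time I may assume $\tau = 1$, so that $\rtot = n\ell$ and each flow emits one packet per unit period; I work in units of $\ell$, so a burst of $b$ corresponds to $b/\ell$ packets. Once all flows have started, the aggregate is periodic with period $1$, hence $t \mapsto A[0,t) - \rtot t$ is, in steady state, periodic with zero net drift. Since $\bar B(t)$ can only increase by moving $t$ just past an arrival, the supremum in~\eqref{eq:agg_burst} is attained immediately after some packet arrival; grouping these candidates by the emitting flow, I would write $B = \max_{i \in \Nats_n} \bar B_i$, where $\bar B_i$ is the token-bucket content evaluated just after a packet of flow $i$. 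One must also check that look-back windows spanning several periods bring nothing new, since each extra period contributes $n$ packets but subtracts $\rtot = n\ell$ worth of rate, a net zero.

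Next I would compute $\bar B_i$ explicitly. Shifting the origin to an arrival of flow $i$ and setting $W_j = h(\phi_j,\phi_i)$ as in~\eqref{eq:h}, flow $j \neq i$ has a packet at $-(1 - W_j)$ in the period preceding the origin. Scanning the look-back window $[-\sigma,0]$, the quantity (packets captured)$\,-\,\rtot\sigma$ has local maxima exactly at the arrival epochs, which yields
\begin{equation*}
  \bar B_i = \ell \max_{k \in \Nats_{n-1}} \{\, n W^{(i)}_{(k)} - (k-1)\,\},
\end{equation*}
where $W^{(i)}_{(1)} \le \cdots \le W^{(i)}_{(n-1)}$ are the order statistics of $(W_j)_{j\neq i}$ (the trivial single-packet term $\ell$ being dominated once $b \ge \ell$, which is the only nontrivial regime). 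Thus $\{\bar B_i > b\}$ is the event that $W^{(i)}_{(k)} > (b/\ell + k - 1)/n$ for some $k \in \Nats_{n-1}$.

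Finally, Lemma~\ref{lem:uniform} makes $(W^{(i)}_j)_{j\neq i}$ a family of $n-1$ iid uniform variables on $[0,1)$, so $\bar B_i$ has the same law for every $i$; the union bound then gives $\P(B > b) \le \sum_{i} \P(\bar B_i > b) = n\,\P(\bar B_1 > b)$. To match the stated event $E$, I would apply the reflection $U_j = 1 - W_j$, which is again iid uniform and satisfies $W_{(k)} = 1 - U_{(n-k)}$; substituting $k \mapsto n-k$ turns $\{W_{(k)} > (b/\ell + k -1)/n\}$ into $\{U_{(k)} < ((k+1) - b/\ell)/n\}$, while the terms with $(k+1) - b/\ell \le 0$ are empty and can be dropped, leaving precisely the range $k \ge \lfloor b/\ell\rfloor$ appearing in $E$. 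Hence $\P(\bar B_1 > b) = \P(E)$ and the claimed bound follows.

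I expect the main obstacle to lie in the first two paragraphs: rigorously justifying that the supremum over all $t \ge 0$ (including the startup transient and multi-period look-back windows) collapses to the finite maximum over a single period's order statistics, and that $B = \max_i \bar B_i$ with $\bar B_i$ given exactly by the displayed formula. By contrast, the reflection bookkeeping in the last step is routine but must be carried out carefully to recover the exact index range of $E$.
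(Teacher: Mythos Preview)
Your argument is correct and follows the same overall strategy as the paper—reduce $B$ to a finite maximum via periodicity, invoke Lemma~\ref{lem:uniform}, and finish with a union bound over $n$ terms—but the decomposition is genuinely different. The paper anchors the window at its \emph{left} endpoint: it sets $H_{i,j}=(j-i+1)-n(T_j-T_i)$, writes $B=\max_{i\in\Nats_n}\max_j H_{i,j}$ with $i$ the \emph{arrival-order} index of the first packet in the window, and must then condition on which flow $f$ satisfies $T_i=\phi_f$ before Lemma~\ref{lem:uniform} can be applied to $(h(\phi_j,\phi_f))_{j\neq f}$. You anchor at the \emph{right} endpoint and index by the \emph{flow label} $i$ of the last packet, so that $\bar B_i$ is a function of $(h(\phi_j,\phi_i))_{j\neq i}$ directly; Lemma~\ref{lem:uniform} then applies with no conditioning, and the final reflection $U_j=1-W_j$ is precisely what converts ``look backward from the right endpoint'' into the paper's ``look forward from the left endpoint'', recovering $E$ exactly. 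Your route is marginally cleaner for this reason. Your treatment of the two points you flag as obstacles is also adequate: the startup transient only suppresses arrivals and hence cannot increase $\bar B(t)$, while the multi-period look-back cancels because each full period contributes $n$ packets against $n\ell$ of rate. The remark about the single-packet term is accurate as well: for $b\geq\ell$ it is empty, and for $b<\ell$ one has $\P(B>b)=1$ so the inequality is trivial (the paper's own index $k=\lfloor b/\ell\rfloor=0$ in $E$ is then formally undefined unless one adopts the convention $U_{(0)}=0$).
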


\begin{proof}
Note that the normalized process $\tilde{A}[s, t) = \frac{1}{\ell} A[\tau s, \tau t)$ follows model ${\bf (H)}$ with $\tau = \ell = 1$, and 
$$\P(B > b) = \P(\tilde{B} > b/\ell).$$We then assume in this proof (and that of Theorem~\ref{thm:homog_dkw}) that $\tau = \ell = 1$, and the final result is obtained by replacing $b$ by $b/ \ell$. One can also remark that the bound is independent of $\tau$.

	Let $T_j,~ j\geq 1$ be the arrival time of the $j$-th packet in the aggregate. With probability 1, $T_j$ is strictly increasing  as we assume all phases are different.
	% We show the following equation:
	%  \begin{equation} \label{eq:s2}
		%     B  = \max_{i=1:n}  B_i
		% \end{equation}
	%     with 
	% \begin{align}
		% \label{eq:bi}
		%      % \forall i \in \{1,\ldots,n\}, ~ 
		%      B_i &\isdef \max_{j = i:(i+n-1)} H_{i,j}, \;\;\; i = 1:n\\
		%      \label{eq:Hij}
		%     \forall j \geq i, H_{i,j} &\isdef (j - i + 1)l - \ltot(T_j - T_i)
		% \end{align}
	%     \forall i \in \{1,2,\ldots, n\}, ~ B_i = \max_{j = i:(i+n-1)} \lc (j - i + 1)l - \ltot(T_j - T_i) \rc
	%     \forall j \geq i, H_{i,j} = (j - i + 1)l - \ltot(T_j - T_i)
	% \end{equation}
% We now prove Step 1: 
First, for all $i \leq j$, for all $(t_i, t_j)\in (T_{i-1},T_i]\times (T_j, T_{j+1}] \isdef C_{i,j}$, $A[t_i, t_j) = j-i+1$, and $H_{i, j} \isdef \sup_{t_i, t_j \in C_{i, j}}  A[t_i, t_j) - n(t_i-t_j) =  j-i+1 - n(T_j-T_i)$. Then, we can rewrite the aggregate burstiness as 
%$B = \sup_{1\leq i \leq j}  \sup_{t_i, t_j \in C_{i, j}}  A[t_i, t_j) - n(t_i-t_j) =\sup_{1\leq i \leq j}  H_{i, j}$. %, with $H_{i,j}   = \max_{i=1:j} \lc (j - i + 1)l - \ltot(T_j - T_i) \rc,~ j\leq i$
%
%As $\forall s \leq t, A[s, t)$ is constant between the arrival times of packets, observe that $B$ and $\bar{B}(t)$, defined in \eqref{eq:agg_burst} and \eqref{eq:agg_burst_t}, can be rewritten as 
%$B = \max_{1 \leq j}\bar{B}(T'_j)$ with $T_j < T'_j < T_{j+1}$ and $\bar{B}(T'_j) = \max_{i=1:j} \lc A[T_i, T'_j) - n(T_j - T_i) \rc$. Also,  observe that $A[T_i, T'_j) = j - i + 1$ and thus $\bar{B}(T_j)$ can be rewritten $\bar{B}(T_j)= \max_{i=1:j} H_{i,j}$ with $H_{i,j}   = \max_{i=1:j} \lc (j - i + 1)l - \ltot(T_j - T_i) \rc,~ j\leq i$; combine this with $B = \max_{1 \leq j}\bar{B}(T_j)$ and obtain 
\begin{equation} \label{eq:s1}
	%B = \max_{1 \leq j} \max_{i=1:j } H_{i,j} 
	B = \sup_{1\leq i \leq j}  \sup_{t_i, t_j \in C_{i, j}}  A[t_i, t_j) - n(t_i-t_j) =\sup_{1\leq i \leq j}  H_{i, j}.
\end{equation}
As our model is the aggregation of $n$ flows of period 1, $T_{j + n} = T_j + 1$ for $j\geq 1$, and $H_{i, j+n} =  j + n-i+1 - n(T_j - 1-T_i) = H_{i, j}$ for all $j \geq i$. Similarly, $H_{i+n,j} = H_{i,j}$ for all  $j \geq i + n$. Combine this with \eqref{eq:s1} and obtain
\begin{equation} \label{eq:p1s1}
	B =  \max_{j \geq 1} ~ \max_{i\in\Nats_j} H_{i,j} =  \max_{i\in\Nats_n} \underbrace{\max_{i\leq j \leq n-1} H_{i,j}}_{B_i}.
\end{equation}
% which concludes Step 1.
% As flows are packetized, points of interest are the arrival times of packets, and thus $\max_{t\geq 0}B(t) = \max_{1 \leq j}B(T_j)$ and $B(T_j) = \max_{i=1:j} \lc A(T_i, T_j] - \ltot(T_j - T_i) \rc$. Also, observe that $A(T_i, T_j] = (j - i + 1)l$ as in time interval $(T_i, T_j]$, $(j-i + 1)$ packets of size $l$ have arrived for the aggregate traffic. Hence
% \begin{equation} \label{eq:s1}
	%     B = \max_{1 \leq j} \max_{i=1:j } H_{i,j} 
	% \end{equation}
% As flows are periodic with a period equal to $1$, we have $T_{j + n} = T_j + 1$ for $j\geq 1$; also, recall that $\ltot = nl$. Then, it follows that $H_{i,j+n} = H_{i,j}$ for $i\leq j$ and $H_{i+n,j} = H_{i,j}$ if $i + n\leq j$. Hence,
% \begin{equation}
	%     \underbrace{\max_{1 \leq j} \max_{i=1:j } H_{i,j}}_{B} =  \max_{i=1:n} \underbrace{\max_{j = i:(i+n-1)} H_{i,j}}_{B_i}
	% \end{equation}
% which concludes \eqref{eq:s2}.

% Let $U_1, ..., U_{n-1}$ be a sequence of $n-1$ independent and identically distributed (iid) standard uniform random variables, and $U_{(1)},...,U_{(n-1)}$ be the corresponding order statistic. 
%We now prove the following:
%\begin{align} \label{eq:BiT}
%	% \forall i \in \{1,\ldots,n\}, 
%	\P \lp B_i > b \rp  = \P(E), \;\;\; i = 1:n
%\end{align}
We now prove that $\forall i \in \Nats_n$, $\P \lp B_i > b \rp  = \P(E)$. 

% \begin{align} \label{eq:BiT}
	%     % \forall i \in \{1,\ldots,n\}, 
	%     \lc B_i \leq b \rc = \bigcap_{k = 1}^{n-1} \lc U_{(k)} \geq \frac{ \lb (k + 1)l - b \rb^+}{\ltot} \rc, \;\;\; i = 1:n
	% \end{align}
Observe that for all $j\geq i$, we have the equality of events $\{H_{i, j} > b\} = \{T_j-T_i < (j-i+1-b) / n\}$, so for all $i \in \Nats_n$, $\lc B_i > b \rc = \bigcup_{j = i}^{i + n -1} \lc T_j - T_i < (j - i + 1 - b) / n \rc$.
% $\lc B_i > b \rc $ can be rewritten as follows: 
%   \begin{align} \label{eq:s21}
	%     \lc B_i > b \rc = \bigcup_{j = i}^{i + n -1} \lc (T_j - T_i) < \frac{(j - i + 1)l - b}{\ltot} \rc
	% \end{align}

We can also notice that the sequence $(T_j-T_i)_{j=i+1}^{n + i - 1}$ is the ordered sequence of phases starting from time origin $T_i$. Conditionally to $T_i = \phi_f$, or equivalently $\phi_{(i)} = f$, $(T_j-T_i)_{j=i+1}^{n + i - 1}$ is the order statistics of $(\phi_j - \phi_f)_{j\neq f}$, which is, from Lemma~\ref{lem:uniform}, iid and uniformly distributed on $[0, 1)$. If follows that 
$$\P(B_i > b~|~\phi_{(i)} = f) = \P(\cup_{k=1}^{n-1} \{U_{(k)} < \frac{k-b}{n}\}) = \P(\cup_{k=\lfloor b \rfloor}^{n-1} \{U_{(k)} < \frac{k-b}{n}\}) = \P(E),$$
since $U_{(k)} \geq 0$. Then, using the law of total probabilities, $\P(B_i > b) =  \sum_{f=1}^n \P(B_i > b~|~\phi_{(i)} = f)\P(\phi_{(i)} = f) = \P(E)$. 

Lastly, we conclude by using the union bound: $\P(B > b) = \P(\cup_{i=1}^n B_i>b) \leq \sum_{i=1}^n \P(B_i > b) = n\P(E)$.
\qed
\end{proof}

We now present the first bound on the tail probability of the aggregate burstiness $B$.

\begin{theorem}[Homogeneous case, DKW bound] \label{thm:homog_dkw}
Assume model ${\bf (H)}$ with $n>1$.  For all $b < n \ell$, a bound on the tail probability of the aggregate burstiness $B$ is given by 
\begin{equation}\label{eq:epsdkw}
	\P\lp B > b\rp \leq n\;\exp{\lp-2(n-1)\lp \frac{\lfloor b /\ell \rfloor}{n-1} - \frac{1}{n}\rp^2 \rp} \isdef \epdkw(n, \ell, b).
\end{equation}
% In the above, $B$ is the aggregate burstiness defined in \eqref{eq:agg_burst}.
% Also,  $\epdkw(n,l,b) \geq \ep(n,l,b)$ where $\ep(n,l,b)$ is the bound of Theorem~\ref{thm:homog}.
\end{theorem}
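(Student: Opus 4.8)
The plan is to start from Proposition~\ref{prop:main}, which already reduces the problem to bounding $\P(E)$, and then apply the DKW inequality to control that probability. Since the proof of the proposition normalizes to $\tau=\ell=1$ and replaces $b$ by $b/\ell$ at the end, I would work with $\ell=1$ throughout and substitute back at the very end; this is why the final bound has $\lfloor b/\ell\rfloor$ rather than $\lfloor b\rfloor$.

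\medskip

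First I would recall the statement of the DKW inequality in its one-sided form. For $n-1$ iid uniform samples on $[0,1]$ with empirical distribution function $F_{n-1}$, the DKW inequality~\cite{massart1990tight} gives
\begin{equation}\label{eq:dkw-plan}
	\P\lp \sup_{x\in[0,1]} \lp F_{n-1}(x) - x\rp > \lambda \rp \leq \exp\lp -2(n-1)\lambda^2\rp,
\end{equation}
valid for $\lambda \geq 0$. The key observation linking this to $E$ is that the event $\{U_{(k)} < c\}$ for a threshold $c$ is exactly the event that at least $k$ of the samples fall below $c$, i.e. $\{F_{n-1}(c) \geq k/(n-1)\}$ (up to the usual care with strict versus non-strict inequalities, which is harmless since $\P(U_{(k)}=c)=0$). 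So the event $E = \bigcup_{k=\lfloor b\rfloor}^{n-1}\{U_{(k)} < (k+1-b)/n\}$ says that for \emph{some} $k$ in that range, the empirical CDF at the point $c_k = (k+1-b)/n$ exceeds $k/(n-1)$. Rewriting, on $E$ there exists a $k$ with $F_{n-1}(c_k) - c_k \geq \frac{k}{n-1} - \frac{k+1-b}{n}$, so the supremum of $F_{n-1}(x)-x$ over $x$ is at least $\min_k \lp \frac{k}{n-1} - \frac{k+1-b}{n}\rp$.

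\medskip

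The crux of the argument is therefore to show that this minimum over $k \in \{\lfloor b\rfloor,\ldots,n-1\}$ equals the quantity $\lambda := \frac{\lfloor b\rfloor}{n-1} - \frac1n$ appearing in \eqref{eq:epsdkw}. I would treat $\frac{k}{n-1} - \frac{k+1-b}{n}$ as a function of the integer $k$ and check that it is \emph{increasing} in $k$: its discrete difference is $\frac1{n-1}-\frac1n > 0$. Hence the minimum over the range is attained at the smallest index $k=\lfloor b\rfloor$, giving $\frac{\lfloor b\rfloor}{n-1} - \frac{\lfloor b\rfloor + 1 - b}{n}$. The main technical subtlety is that this is not yet exactly $\lambda$ because of the $+1-b$ term; here I would use that $b - \lfloor b\rfloor \geq 0$, so $\frac{\lfloor b\rfloor + 1 - b}{n} \leq \frac1n$, which yields the clean lower bound $\frac{\lfloor b\rfloor}{n-1}-\frac1n = \lambda$. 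This monotonicity-plus-flooring step is the one I expect to require the most care, since it is where the slight looseness (and the exact form of the floor) enters.

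\medskip

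Finally I would assemble the pieces: $\P(E) \leq \P\lp \sup_x (F_{n-1}(x)-x) \geq \lambda\rp \leq \exp\lp-2(n-1)\lambda^2\rp$ by \eqref{eq:dkw-plan}, then multiply by $n$ via Proposition~\ref{prop:main} to get $\P(B>b) \leq n\exp\lp-2(n-1)\lambda^2\rp$, and substitute $b \mapsto b/\ell$ and $\lambda = \frac{\lfloor b/\ell\rfloor}{n-1}-\frac1n$ to recover \eqref{eq:epsdkw}. A small additional check is that $\lambda \geq 0$ so that DKW applies: this holds whenever $\lfloor b/\ell\rfloor \geq (n-1)/n$, i.e. essentially $b \geq \ell$; for $b<\ell$ the exponent is non-positive anyway and one should verify the bound is still valid (or trivial) in that regime, which I would note as a boundary case rather than grind through.
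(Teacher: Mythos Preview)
Your approach is essentially the same as the paper's: reduce via Proposition~\ref{prop:main} to bounding $\P(E)$, embed $E$ into the one-sided DKW event $\{\sup_x(F_{n-1}(x)-x)\geq\lambda\}$ with $\lambda=\frac{\lfloor b\rfloor}{n-1}-\frac1n$, and verify the threshold inequality (you do it by monotonicity in $k$, the paper checks it directly for each $k$; both are fine). The one point the paper handles more carefully is that Massart's one-sided inequality with constant $1$ is only stated for $\lambda\geq\sqrt{\log 2/(2(n-1))}$, not merely $\lambda\geq 0$; but this is harmless, since in the complementary regime the claimed bound satisfies $\epdkw(n,\ell,b)\geq n/2\geq 1$ and is therefore trivially valid---exactly the ``boundary case'' you allude to but should widen slightly.
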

\begin{proof}
Let us assume that $\tau = \ell = 1$ in the proof, as in the proof of Proposition~\ref{prop:main}. 
Observe that when $ \lfloor b \rfloor < 1-\frac1n +\sqrt{\frac{(n-1)\log 2}{2}}$, we have $\epdkw(n, 1, b)\geq \frac n 2$, hence \eqref{eq:epsdkw} holds. Therefore we now proceed to prove \eqref{eq:epsdkw} when $ \lfloor b \rfloor \geq  1-\frac1n +\sqrt{\frac{(n-1)\log 2}{2}}$.

\textbf{Step 1:} Consider $n-1$ iid, rv $U_1, \ldots, U_{n-1}$ and its order statistics is $U_{(1)}\leq  \cdots \leq U_{(n-1)}$, defined by sorting $U_1, \ldots, U_{n-1}$ in non-decreasing order.  For $\varepsilon > 0$, define $E'(\varepsilon)$ by
\begin{equation} \label{eq:Eprime}
	E'(\varepsilon) \isdef \bigcup_{k=1}^{n-1} 
	\lc U_{(k)} < \frac{k}{n-1}-\varepsilon\rc. 
\end{equation}
We now show that if $\varepsilon \geq \sqrt{\frac{\log 2}{2(n-1)}} $, 
\begin{equation} 
	\P\lp E'(\varepsilon)\rp \leq e^{-2(n-1)\varepsilon^2}.
	\label{eq:s1DKW}
\end{equation} 
% \begin{equation} 
	%      \P\lp E'(\varepsilon)\rp \leq e^{-2(n-1)\varepsilon^2} \mif \varepsilon \geq \varepsilon_0 \melse    \P\lp E'(\varepsilon)\rp \leq \frac{1}{2}
	%      \label{eq:s1DKW}
	%  \end{equation} 
%   \begin{equation} 
	%     \P\lp E'(\varepsilon)\rp \leq e^{-2(n-1)\varepsilon^2} \ind{\epsilon \geq \epsilon_0} + \frac{1}{2}\ind{\epsilon < \epsilon_0} 
	%     % \mif \varepsilon \geq \sqrt{\frac{\log 2}{2(n-1)}} \melse    \P\lp E'(\varepsilon)\rp \leq \frac{1}{2}
	%     % \label{eq:s1DKW}
	% \end{equation} 
%       We show the following equation:
%   \begin{equation} 
	%      \P(E')\leq e^{-2(n-1)\varepsilon^2} \mif \varepsilon \geq \sqrt{\frac{\log 2}{2(n-1)}}
	%      \label{eq:s1DKW}
	%  \end{equation} 
%  with 
%   \begin{equation} \label{eq:Eprime}
	%    E' = \bigcup_{k=1}^{n-1} 
	%    \lc U_{(k)} < \frac{k}{n-1}-\varepsilon\rc  
	% \end{equation}
Let $F_{n-1}$ be the (random) empirical cumulative distribution function of $U_1,\ldots,U_{n-1}$, defined  $\forall x \in [0,1]$ by
\begin{equation}
	F_{n-1}(x)=\frac{1}{n-1}\sum_{i=1}^{n-1}\ind{U_{(i)}\leq x}.
\end{equation} 
The Dvoretzky–Kiefer–Wolfowitz inequality \cite{massart1990tight} states that if $\varepsilon \geq \sqrt{\frac{\log 2}{2(n-1)}}$, then 
\begin{equation}
\label{eq-wpjgrt}
	\P \big(\sup_{x\in[0,~1]} (F_{n-1}(x)-x)>\varepsilon  \big) \leq e^{-2(n-1)\varepsilon^2}.
\end{equation} 
We can apply this to find the bound of interest. First, we prove that
\begin{equation} \label{hgqewd}
	\sup_{x\in[0,~1]} (F_{n-1}(x)-x)>\varepsilon  
	\Leftrightarrow
	\exists k\in \Nats_{n-1},~U_{(k)}<\frac{k}{n-1}-\varepsilon.
\end{equation}

Proof of  $\Leftarrow$: First, observe that $F_{n-1}(U_{(k)}) = k / (n-1)$, so if  $\frac{k}{n-1} - U_{(k)} > \varepsilon$ for some $k$, then  $F_{n-1}\lp U_{(k)} \rp - U_{(k)}>\varepsilon$, and the left-hand side holds. %with $x = U_{(k)}$.

Proof of $\Rightarrow$: Set $U_{(0)} = 0$ and $U_{(n)} = 1$. Observe that  for all $k \in \{0, \ldots, n-1\}$,  and all $U_{(k)} \leq x < U_{(k+1)}$, $F_{n-1}\lp x\rp = F_{n-1}\lp U_{(k)}\rp = \frac{k}{n-1}$. 
%Also, observe that $x > U_{(n-1)}$, $F_{n-1}\lp x\rp = 1$ and $x < U_{(1)}$, $F_{n-1}\lp x\rp = 0$. 
Hence, $F_{n-1}(x) - x = \frac{k}{n-1} - x$ is decreasing on each segment $[U_{(k)}, U_{(k+1)})$.
Then,
 %\begin{equation}
%	F_{n-1}(x) - x =  \begin{cases} 
%		-x & \text{if } 0 \leq x < U_{(1)}\\
%		\frac{k}{n-1} - x & \text{if } U_{(k)} \leq x < U_{(k+1)} \text{for $k =1:(n-2)$} \\
%		1 - x & \text{if }   U_{(n-1)} \leq x \leq  1
%	\end{cases}
%\end{equation}
%Thus, $F_{n-1}(x) - x \leq 0 \leq \epsilon$ when $0 \leq x < U_{(1)}$  and $F_{n-1}(x) - x$ is decreasing for $U_{(n)} \leq x \leq  1$ and for $ U_{(k)} \leq x < U_{(k+1)}$ with $k =1:(n-2)$.
the supremum in the left-hand side of \eqref{hgqewd} is obtained for some $x =  U_{(k)}$, i.e., $\sup_{x\in[0,~1]} (F_{n-1}(x)-x) = \sup_{k\in\{ 0,\ldots,n-1\}} (F_{n-1}(U_{(k)})-U_{(k)}) = \sup_{k\in\{ 0,\ldots,n-1\}} (\frac{k}{n-1}-U_{(k)})$, which implies the right-hand side ($F_{n-1}(0) - 0 = 0 < \varepsilon$).

% it implies that $\exists k \in \{ 1,\ldots,n-1\}$ the supremum is obtained for some $x =  U_{(k)}$, i.e., 

% First, observe that for $k = 1:(n-2)$ and $U_{(k)} \leq x < U_{(k+1)}$, $F_{n-1}\lp x\rp = F_{n-1}\lp U_{(k)}\rp = \frac{k}{n-1}$. Second, observe that $x > U_{(n-1)}$, $F_{n-1}\lp x\rp = 1$ and $x < U_{(1)}$, $F_{n-1}\lp x\rp = 0$. Third, observe

% , and as $F_{n-1}\lp U_{(k)}\rp =  \frac{k}{n-1} - U_{(k)}$, we have  $\varepsilon<F_{n-1}\lp U_{(k)}\rp - U_{(k)}$. Also,  as $0 \leq U_{(k)} \leq 1$,  we have 

%  This is obtained by observing that if the inequality $F_{n-1}(x)-x>\varepsilon$ occurs for some $x$, then it must occur for $x=U_{(k)}$ for some $k$.
This proves \eqref{hgqewd}, and Step~1 is concluded by combining it with~\eqref{eq-wpjgrt}.

\textbf{Step 2:} We now proceed to show that if 
\begin{equation} \label{eq:eps}
	\varepsilon = \frac{\lfloor b \rfloor}{n-1}-\frac{1}{n},
\end{equation}
then, $E\subseteq E'(\varepsilon)$, where  event $E$ is defined in Proposition~\ref{prop:main}.

It is enough to show that for all $k \in \{\lfloor b \rfloor,\ldots,  n-1\}$, $\frac{k+1 - b}{n} \leq \frac{k - \lfloor b \rfloor}{n-1} + \frac{1}{n}$, which can be deduced from the following implications: 
$$ \frac{k+1 - b}{n} \leq \frac{k - \lfloor b \rfloor}{n-1} + \frac{1}{n} \Leftrightarrow \frac{k - b}{n} \leq \frac{k - \lfloor b \rfloor}{n-1} \Leftarrow  \frac{k - \lfloor b \rfloor}{n} \leq \frac{k - \lfloor b \rfloor}{n-1} \Leftrightarrow  \frac{1}{n} \leq \frac{1}{n-1}.$$

\textbf{Step 3:}
By Step 2, we have $ \P(E) \leq \P(E'(\varepsilon))$. Also,  observe that $ \lfloor\frac{b}{l} \rfloor\geq 1-\frac1n +\sqrt{\frac{(n-1)\log 2}{2}}$ implies $\varepsilon \geq \sqrt{\frac{\log 2}{2(n-1)}}$. Thus, combine it with Step 1 to obtain 
\begin{equation} \label{eq:e}
	\P(E)\leq \P(E'(\varepsilon)) \leq  \exp{\Big(-2(n-1) \Big(\frac{\lfloor b \rfloor}{n-1}-\frac{1}{n}\Big)^2\Big)}.
\end{equation}
Combine \eqref{eq:e} with Proposition~\ref{prop:main} to conclude the theorem. \qed

% \hossein{By Step 2, we have $ \P(E) \leq \P(E'(\varepsilon))$. Combine it with Step 1 to obtain
	% \begin{equation} \label{eq:e}
		%     \P(E) \leq  \exp{\lp-2(n-1) \lp\frac{\lfloor \frac{b}{l} \rfloor}{n-1}-\frac{1}{n}\rp^2\rp}
		% \end{equation}
	% provided that $\varepsilon = \frac{\lfloor \frac{b}{l} \rfloor}{n-1}-\frac{1}{n} \geq \sqrt{\frac{\log 2}{2(n-1)}}$, i.e., provided that
	% \begin{equation}
		%     % \lp \frac{\lfloor b\rfloor}{n-1}-\frac{1}{n}\rp^2\geq \frac{\log 2}{n-1}
		%     \lfloor \frac{b}{l} \rfloor \geq 1-\frac1n +\sqrt{\frac{\log 2}{2}}\sqrt{n-1}
		%      \label{eq-ghefq}
		% \end{equation}
	% Combine \eqref{eq:e} with Proposition~\ref{prop:main} to conclude the theorem. \qed}

% For the value of $\varepsilon $ obtained at Step 2, we have $E\subseteq E'(\varepsilon)$ and thus, combine \eqref{eq:s1DKW},  \eqref{eq:eps}, and Proposition~\ref{prop:main} to obtain

% \begin{equation}
	%     \P\lp B > b\rp \leq n\P(E) \leq n\P(E') \leq n \exp{\lp-2(n-1) \lp\frac{\lfloor \frac{b}{l} \rfloor}{n-1}-\frac{1}{n}\rp^2\rp}
	%     \label{eq-jhwdef}
	% \end{equation}
% provided that $\varepsilon \geq \sqrt{\frac{\log 2}{2(n-1)}}$, i.e., provided that
% \begin{equation}
	%     % \lp \frac{\lfloor b\rfloor}{n-1}-\frac{1}{n}\rp^2\geq \frac{\log 2}{n-1}
	%     \lfloor \frac{b}{l} \rfloor \geq 1-\frac1n +\sqrt{\frac{\log 2}{2}}\sqrt{n-1}
	%      \label{eq-ghefq}
	% \end{equation}
% which concludes the theorem. \qed
\end{proof}

Note that the bound of Theorem~\ref{thm:homog_dkw} is only less than one and is non-trivial when $  \lfloor\frac{b}{l} \rfloor \geq 1-\frac1n +\sqrt{\frac{(n-1)\log 2}{2}}$.

The following corollary provides a closed-form formulation for the minimum value for the aggregate burstiness with a violation probability of at most $\varepsilon$.  It is obtained by setting the right-hand side of \eqref{eq:epsdkw} in Theorem~\ref{thm:homog_dkw} to $\varepsilon$.
% \hossein{It is obtained by Theorem~\ref{thm:homog_dkw}, and by finding the smallest value of $b$ such that $\epdkw(n,l,b) \leq \varepsilon$.}

%  It is obtained by setting the right-hand side of \eqref{eq:epsdkw} in Theorem~\ref{thm:homog_dkw} to $\varepsilon$.

% \jylb{how about the condition on epsilon ? why does it disappear ?}
\begin{corollary}[Quasi-deterministic burstiness bound] \label{cor:homog_burst}
Assume model ${\bf (H)}$ with $n>1$. Consider some $0 < \varepsilon < 1$, and define 
\begin{equation}
	b(n, \ell, \varepsilon) \isdef \ell \Bigl\lceil 1 - \frac1n+\sqrt{\frac{(n-1)(\log\; n-\log\;\varepsilon)}{2}} \Bigr\rceil.
\end{equation}
Then, $b(n, \ell, \varepsilon)$ is a quasi-deterministic burstiness bound for the aggregate with the violation probability of at most $\varepsilon$, i.e., $\P\lp B > b\lp n, \ell, \varepsilon \rp \rp \leq \varepsilon$.
% \begin{equation}
	%      \P\lp \exists t \geq 0,~ B(t) > b_F^\epsilon\rp \leq \epsilon
	% \end{equation}
\end{corollary}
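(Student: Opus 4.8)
The plan is to read the corollary as a direct inversion of the tail bound of Theorem~\ref{thm:homog_dkw}. Since that theorem already gives $\P(B>b)\leq\epdkw(n,\ell,b)$ for every $b<n\ell$, it suffices to show that the proposed value $b=b(n,\ell,\varepsilon)$ satisfies $\epdkw(n,\ell,b)\leq\varepsilon$; chaining the two inequalities then delivers $\P(B>b)\leq\varepsilon$. I would first dispose of the degenerate case $b(n,\ell,\varepsilon)\geq n\ell$: there the claim is immediate because $\ltot=n\ell$ is a deterministic burstiness bound, so $\P(B>b)=0\leq\varepsilon$. In the remaining case $b<n\ell$ I would invoke Theorem~\ref{thm:homog_dkw} as a black box and concentrate solely on the deterministic inequality $\epdkw(n,\ell,b)\leq\varepsilon$.

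The inversion itself is algebraic. Unfolding $\epdkw$, the requirement becomes
\[
 n\,\exp\!\lp-2(n-1)\lp \frac{\lfloor b/\ell\rfloor}{n-1}-\frac1n\rp^2\rp\leq\varepsilon .
\]
Taking logarithms (valid since both sides are positive for $0<\varepsilon<1$ and $n>1$) and rearranging isolates the squared term,
\[
 \lp \frac{\lfloor b/\ell\rfloor}{n-1}-\frac1n\rp^2\;\geq\;\frac{\log n-\log\varepsilon}{2(n-1)} ,
\]
where the right-hand side is strictly positive because $n>1$ and $0<\varepsilon<1$ force $\log n-\log\varepsilon>0$. I would take the positive square root---the branch corresponding to a large burst, which is the regime of interest---and solve for the integer $\lfloor b/\ell\rfloor$, using $\frac{n-1}{n}=1-\frac1n$ and $(n-1)\sqrt{\tfrac{\log n-\log\varepsilon}{2(n-1)}}=\sqrt{\tfrac{(n-1)(\log n-\log\varepsilon)}{2}}$, to reach the threshold
\[
 \lfloor b/\ell\rfloor\;\geq\;1-\frac1n+\sqrt{\frac{(n-1)(\log n-\log\varepsilon)}{2}} .
\]

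The final step is integrality. Because $\lfloor b/\ell\rfloor$ is an integer, the threshold holds as soon as $\lfloor b/\ell\rfloor$ is at least the ceiling of the right-hand side, which is exactly $b(n,\ell,\varepsilon)/\ell$. Choosing $b=b(n,\ell,\varepsilon)=\ell\lceil\,\cdot\,\rceil$ makes $b/\ell$ equal to that integer ceiling, so $\lfloor b/\ell\rfloor$ equals it as well and the threshold is satisfied; this yields $\epdkw(n,\ell,b)\leq\varepsilon$ and closes the proof. I expect no serious obstacle here: the only delicate points are placing the ceiling on the correct side (so that rounding up preserves, rather than breaks, the inequality) and justifying the choice of the positive root, both of which are routine once the monotonicity of the tail bound in $b$ is noted.
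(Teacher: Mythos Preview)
Your proposal is correct and matches the paper's own justification, which is simply the one-line remark that the corollary ``is obtained by setting the right-hand side of~\eqref{eq:epsdkw} in Theorem~\ref{thm:homog_dkw} to~$\varepsilon$.'' Your write-up is in fact more careful than the paper's, since you explicitly handle the degenerate case $b(n,\ell,\varepsilon)\geq n\ell$, justify the choice of the positive square root, and track the ceiling/floor interaction; none of this departs from the intended argument.
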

% \begin{proof}
%     Observe that $\frac{b(n,l, \varepsilon)}{l} = \lfloor\frac{b(n,l, \varepsilon)}{l} \rfloor\geq 1-\frac1n +\sqrt{\frac{(n-1)\log 2}{2}}$ as $\frac{b(n,l, \varepsilon)}{l} $ is an integer and $0 < \varepsilon < 1$. Then, apply Theorem~\ref{thm:homog_dkw} with $b(n,l, \varepsilon)$ and observe that $\epdkw\lp n ,b(n,l, \varepsilon), l  \rp  \leq \varepsilon$.\qed
% \end{proof}

% Corollary~\ref{cor:homog_burst} is obtained by setting the right-hand side of \eqref{eq:epsdkw} in Theorem~\ref{thm:homog_dkw} to $\varepsilon$, i.e., by finding the smallest $b$ such that $\epdkw(n,l,b) \leq \varepsilon$. 
Observe that $b(n, \ell, \varepsilon)$ grows in $\sqrt{n\log{n}}$ as opposed to the deterministic bound ($\ltot = n\ell$) that grows in linearly (see Fig.~\ref{fig:homog}b).
% , i.e., the ratio $\frac{b(n, l,\varepsilon)}{\ltot}$ is in the order of $\sqrt{\frac{\log{n}}{n}}$ (see Fig.~\ref{fig:num}b).
% We also numerically confirm that, given a non-zero violation probability, $b(n, l,\epsilon)$ is tight compared to the burstiness obtained by Theorem~\ref{thm:homog} (see Fig. ??). 
% We now present a refinement to Theorem~\ref{thm:homog_dkw} that  provides slightly  better bounds when the number of flows is small. 

Proposition~\ref{prop:main} introduces the event $E$ such that an upper bound of $\P(E)$ is used to derive an upper bound on the tail probability of the aggregate burstiness.   Theorem~\ref{thm:homog_dkw} is derived from the DKW upper bound of  $\P(E)$, which is tight when the number of flows $n$ is large. In Theorem~\ref{thm:homog}, we compute the exact value of $\P(E)$; thus, it provides a slightly better bound when the number of flows is small but at the expense of not having a closed-form expression.

% We now present a refinement to Theorem~\ref{thm:homog_dkw} that  provides slightly  better bounds when the number of flows is small.   Lemma~\ref{lem:main} introduces the event $E$ such that an upper bound of $\P(E)$ provides an upper bound on the tail probability of the aggregate burstiness.   Theorem~\ref{thm:homog_dkw} gives a closed-form expression that gives an upper bound to $\P(E)$ using the DKW inequality and  is tight when the number of flows $n$ is large. Theorem~\ref{thm:homog}  exactly computes $\P(E)$; thus, it provides a slightly better bound when the number of flows is small but at the expense of not having a closed-form expression.
% providing bounds less than or equal to that of Theorem~\ref{thm:homog_dkw}. However,  provides a slightly better bound when the number of flows is small, but at the expense of not having a closed-form expression.

\begin{theorem}[Refinement of Theorem~\ref{thm:homog_dkw} for small groups] \label{thm:homog}
Assume model ${\bf (H)}$ with $n>1$. For all $b \geq 0$. Then, a bound on the tail probability of the aggregate burstiness $B$ is
% of Section~\ref{sec:assump}, and also assume that flows have the same period and same packet size, i.e., for every flow $f$, $\tau_f = \tau$ and $l_f = l$ for some $l, \tau > 0$. Consider some $b$. Then, 
\begin{equation} \label{eq:epsThm1}
	\P\lp  B > b\rp \leq n(1 - p(n, \ell, b)) \isdef \ep(n, \ell, b),
\end{equation}
with 
\begin{align}
	\label{eq:pbThm1}
	p(n, \ell, b) & = (n-1)!\int_{y_{n-1} = u_{n-1}}^{1} \int_{y_{n-2} = u_{n-2}}^{y_{n-1}} \ldots \int_{y_{i} = u_{i}}^{y_{i+1}} \ldots \int_{y_{1} = u_{1}}^{y_{2}} 1   \,dy_1 \ldots \,dy_{n-1},
\end{align}
%	\label{eq:ukThm1}
	% p(b) & = (n-1)!\int_{y_{n-1} = u_{n-1}}^{1} \int_{y_{n-2} = u_{n-2}}^{y_{n-1}} \ldots \int_{y_{i} = u_{k}}^{y_{k+1}} \ldots \int_{y_{1} = u_{1}}^{y_{2}}    \,dy_1 \ldots \,dy_{n-1}
	and $u_k = \frac{[ (k+1) - b / \ell ]^+}{n}$, for all $k\in\Nats_{n-1}$ and $[x]^+=\max(0, x)$.
	% , \forall k \in \{1, \ldots, n-1 \}

%Also,  $\epdkw(n,l,b) \geq \ep(n,l,b)$ where $\epdkw(n,l,b)$ is the bound of Theorem~\ref{thm:homog_dkw}. 
\end{theorem}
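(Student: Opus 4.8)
The plan is to reduce the whole statement to an \emph{exact} evaluation of $\P(E)$, since Proposition~\ref{prop:main} already supplies $\P(B > b) \leq n\,\P(E)$. It therefore suffices to prove $\P(E) = 1 - p(n, \ell, b)$, equivalently $\P(E^c) = p(n, \ell, b)$, and then the bound $\ep(n, \ell, b) = n\bigl(1 - p(n, \ell, b)\bigr)$ is immediate. As in the proof of Proposition~\ref{prop:main}, I would first normalize to $\tau = \ell = 1$ and recover the general case by substituting $b/\ell$ for $b$ at the end; because the $u_k$ in the statement are already written in terms of $b/\ell$, this substitution is transparent, and in the normalized setting $u_k = [\,(k+1) - b\,]^+/n$.

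The first step is to pass to the complement. In normalized form $E^c = \bigcap_{k=\lfloor b\rfloor}^{n-1}\{U_{(k)} \geq u_k\}$, and I would extend the intersection to all indices $k \in \Nats_{n-1}$: for $k < \lfloor b\rfloor$ one has $(k+1) - b \leq 0$, hence $u_k = 0$, so the event $\{U_{(k)} \geq 0\}$ holds surely and adjoining these indices leaves the intersection unchanged. This gives the clean form $E^c = \bigcap_{k=1}^{n-1}\{U_{(k)} \geq u_k\}$, matching the indexing of $p(n,\ell,b)$.

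The second step is to evaluate $\P(E^c)$ by integrating the joint density of the uniform order statistics in \eqref{eq:orderUniform}. This yields $\P(E^c) = (n-1)!\int_R dy_1 \cdots dy_{n-1}$ over the region $R = \{(y_1,\ldots,y_{n-1}) : u_k \leq y_k \text{ for all } k,\ 0 \leq y_1 \leq \cdots \leq y_{n-1} \leq 1\}$. I would then rewrite this region integral as the nested integral of \eqref{eq:pbThm1} by integrating from the outside inwards: $y_{n-1}$ ranges over $[u_{n-1}, 1]$ and each $y_k$ over $[u_k, y_{k+1}]$, a range that simultaneously encodes the ordering constraint $y_k \leq y_{k+1}$ and the lower constraint $y_k \geq u_k$ (the requirement $y_1 \geq 0$ being absorbed into $y_1 \geq u_1 \geq 0$).

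The point needing the most care, and the main technical obstacle, is to verify that these nested limits are consistent, i.e. that each inner interval $[u_k, y_{k+1}]$ is nonempty, so that the iterated integral genuinely equals the integral over $R$ rather than producing a spurious signed contribution. This reduces to the monotonicity of the lower limits: since $[\,(k+1)-b\,]^+$ is non-decreasing in $k$, we have $u_1 \leq u_2 \leq \cdots \leq u_{n-1}$, and combined with $y_{k+1} \geq u_{k+1}$ (enforced at the preceding level of integration) this gives $u_k \leq u_{k+1} \leq y_{k+1}$, so every inner limit is well ordered. Once this is checked, $\P(E^c) = p(n, \ell, b)$; substituting back $b/\ell$ and invoking Proposition~\ref{prop:main} with $\P(E) = 1 - p(n, \ell, b)$ completes the proof.
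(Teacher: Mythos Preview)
Your proposal is correct and follows essentially the same approach as the paper: pass to the complement $\bar E$, extend the intersection to all indices $k\in\Nats_{n-1}$ via $u_k=0$ for $k<\lfloor b/\ell\rfloor$, and integrate the joint density \eqref{eq:orderUniform} of the uniform order statistics to obtain $\P(\bar E)=p(n,\ell,b)$, then invoke Proposition~\ref{prop:main}. Your explicit check that the lower limits $u_k$ are non-decreasing (so that each nested interval $[u_k,y_{k+1}]$ is well ordered and no spurious signed contribution appears) is a point the paper leaves implicit.
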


Note that the computation of the bound of Theorem~\ref{thm:homog} requires computing $p(n,\ell, b)$ in \eqref{eq:pbThm1},  which is a series of polynomial integrations, and finding a general closed-form formula might be challenging. However, computing the bound can be done iteratively as in Algorithm~\ref{alg:thm1}: The integrals are computed from the inner sign to the outer (incorporation factor $i$ from the factorial in the $i$-th integral). Polynoms are computed at each step and variable $q_{j}^m$ represents the coefficient of degree $j$ of the $m$-th integral. Note that we always have $q_m^m=1$, so the monomial of degree $n-1$ cancels in~\eqref{eq:epsThm1}. 

All computations involve exact representations of the integrals (no numerical integration) and  use exact arithmetic with rational numbers; therefore, the results are exact with infinite precision.

% We iteratively compute the bound of Theorem~\ref{thm:homog} in Algorithm~\ref{alg:thm1}. 

% Note that $p(n,l,b)$ in \eqref{eq:pbThm1} is a series of polynomials integration, and finding a general close-form formula might be challenging. We iteratively compute the bound of Theorem~\ref{thm:homog} in Algorithm~\ref{alg:thm1}. \jylb{This last sentence is not well written. \eqref{eq:pbThm1} is the same as ``the bound of Theorem~\ref{thm:homog}" but this is not obvious. Also there is no link with the previous sentence. }

% However, we compute the bound of Theorem~\ref{thm:homog} iteratively in Algorithm~\ref{alg:thm1}. 

\begin{algorithm}[H]
%\scriptsize 
\SetKwInOut{IV}{Local Variables}
\SetKwInOut{Input}{Inputs}
\SetKwInOut{Output}{Output}
\Input{number of flows $n$, a burst $b$, and a packet size $\ell$.}
\Output{$\ep(n,\ell, b)$ such that $\P(B > b) \leq \ep(n,\ell, b)$.}
% , a bound on the tail probability of the aggregate burstiness, i.e., $\P(B > b) \leq \ep$.}
% \KwResult{$\ep$, a bound on the tail probability of the aggregate burstiness, i.e., $\P(B > b) \leq \ep$}
%\IV{integer $m$, $u_k$ for $k =\lfloor \frac{b}{l}\rfloor:n-1$, and collections $\lp q_1^m, \ldots, q_m^m \rp$ for $m=\lfloor\frac{b}{l} \rfloor-1:n-1$.}
% 	 \KwResult{$\lp \beta, d, \zcut,b\rp$: Collection of per-node strict service curve, per-node delay bounds,  bounds on propagated burstiness of flows at cuts, and  aggregate burst bound at the output of every node and every edge for all classes}

% $\forall k \in \{\lfloor b \rfloor - 1, \ldots, n-1 \}$, $u_k \gets$ as in \eqref{eq:ukThm1}\;
% $\bar{b} \gets \frac{b}{l}$\;
$m \gets \lfloor b / \ell \rfloor - 1$\;
$\lp q_0^m, q_1^m, \ldots, q_m^m \rp \gets \lp 0, 0, \ldots,0, 1 \rp$\;
\For{$m \leftarrow \lfloor b / \ell \rfloor$ {\bf to} $n-1$}{
$u_{m} \gets (m+1 - b / \ell ) / n$\;
$q_0^{m} \gets -\sum_{j=0}^{m-1} \frac{mq_j^{m-1}}{j+1}u_{m}^{j+1}$\; % \mand q_{m}^{m} \gets 1 $\;
\lFor{$i \leftarrow 1$ {\bf to} $m$}{
	$q_i^{m} \gets \frac{m q_{i-1}^{m-1}}{i} $
	% $\lp q_1^{m+1}, \ldots, q_{m+1}^{m+1} \rp \gets \Pi\lp \lp q_1^m, \ldots, q_m^m \rp , u_{m+1} \rp$\;
}
% $q_{m+1}^{m+1} \gets 1 $\;
}
%$\ep \gets  n\sum_{i=0}^{n-2} q_i^{n-1}$\;
\textbf{return} $n\sum_{i=0}^{n-2} q_i^{n-1}$
\caption{Computation of $\ep\lp n,\ell, b\rp$ from Theorem~\ref{thm:homog}}
\label{alg:thm1}
\end{algorithm}

\begin{proof}
%  We first show the following equation where recall that event $E$ is defined in Proposition~\ref{prop:main}:
% \begin{equation} \label{eq:order}
%     \P \lp E \rp = 1 - p(n,l,b)
% \end{equation}
Let $\Bar{E}$ be the complementary event of $E$ defined in Proposition~\ref{prop:main}. %Observe that, since $u_k = 0$ for all $k< \lfloor b \rfloor$, 
\begin{equation}
\Bar{E} = \bigcap_{k = 1}^{n-1} \Big\{ U_{(k)} \geq \frac{\lb k + 1 - b / \ell \rb^+}{n} \Big\}. %=  \Big\{ \forall k \in\Nats_{n-1},~U_{(k)} \geq \frac{\lb (k + 1) - b \rb^+}{n} \Big\}.
\end{equation}
Let $f_{U_{(1)},\ldots,U_{(n-1)}}$ be the density function of the joint distribution of $U_{(1)},\ldots, U_{(n-1)}$, given in \eqref{eq:orderUniform}. Then
% \cite[Equation 1.145]{gentle2009computational}
% \begin{equation}
%     f_{U_{(1)},...,U_{(n-1)}}\lp y_{1},...,y_{n-1} \rp  = (n-1)!\prod_{k=1}^{n-1}f_{U}(y_{k})\ind{y_{1}\leq y_2 \leq  \ldots \leq y_{n-1}}
% \end{equation}
% where $f_U$ is the probability density function of a standard uniform; as $y_{k} \in [0,1]$, $f_{U}(y_{k}) = 1$ for $k = 1:n-1$
% % $k \in \{1,\ldots,n-1\}$
% . Hence,
% \begin{equation}
%     f_{U_{(1)},...,U_{(n-1)}}\lp y_{1},...,y_{n-1} \rp  = (n-1)!\ind{y_{1}\leq y_2 \leq  \ldots \leq y_{n-1}}
% \end{equation}
%$\P \lp \Bar{E} \rp $ is equal to the probability that  $k = 1:n-1, U_{(k)} \geq \frac{(k + 1)l - b}{\ltot} = u_k$, hence 
% $\P \lp \Bar{E} \rp $
% $ \P \lp \bigcap_{k = 1}^{n-1} \lc U_{(k)} \geq \frac{\lb (k + 1)l - b \rb^+}{\ltot} \rc \rp = $
{\footnotesize 
\begin{align}
	\P \lp \Bar{E} \rp &=  \int_{y_{n-1} = u_{n-1}}^{1} \ldots \int_{y_{i} = u_{i}}^{1} \ldots \int_{y_{1} = u_{1}}^{1} f_{U_{(1)},\ldots,U_{(n-1)}}\lp y_{1},\ldots,y_{n-1} \rp    \,dy_1 \ldots \,dy_{n-1} \\
	&=\int_{y_{n-1} = u_{n-1}}^{1} \ldots \int_{y_{i} = u_{i}}^{1} \ldots \int_{y_{1} = u_{1}}^{1} (n-1)!\ind{0 \leq y_{1}\leq y_2 \leq  \ldots \leq y_{n-1} \leq 1}   \,dy_1 \ldots \,dy_{n-1} \\
	&=(n-1)!\int_{y_{n-1} = u_{n-1}}^{1}  \ldots \int_{y_{i} = u_{i}}^{y_{i+1}} \ldots \int_{y_{1} = u_{1}}^{y_{2}} 1   \,dy_1 \ldots \,dy_{n-1} = p(n,\ell, b).
	% \\
	% &= p(n,l,b)
\end{align}
}
Combine it with  $\P \lp \Bar{E} \rp = 1 - \P \lp E \rp$ and Proposition~\ref{prop:main} to conclude the theorem. \qed
% \textbf{Final Step (Union Bound)}: We proceed to prove the theorem. By Step 1, 
% % It follows that
% \begin{equation} \label{eq:s41}
%    \lc B > b \rc  = \bigcup_{i=1}^n \lc B_i > b \rc \Rightarrow \P \lp B > b \rp  = \sum_{i=1}^n \P \lp B_i > b \rp
% \end{equation}
% % \begin{equation} \label{eq:s41}
% %    \P \lp B > b \rp  = \bigcup_{i=1}^n \P \lp B_i > b \rp 
% % \end{equation}
% Combining Step 2 and Step 3, we have 
% \begin{equation}\label{eq:s42}
%    % \forall i \in \{1,\ldots,n\}, 
%    \P \lp B_i > b \rp  = 1 -  p(n,l,b), \;\;\; i = 1:n
% \end{equation}
% Combining \eqref{eq:s41} and \eqref{eq:s42} we have 
% \begin{equation} \label{eq:s43}
%    \P \lp B > b \rp  = n\P \lp B_i > b \rp =  n\lp 1 -  p(n,l,b)\rp 
% \end{equation}
% which concludes  the theorem. \qed
\end{proof}

Note that since Theorem~\ref{thm:homog} computes the exact probability of event $E$, we have $\epdkw(n,\ell, b) \geq \ep(n, \ell, b)$.

\section{Heterogeneous Case} \label{sec:heterogen}

In this section, we consider the case where flows have different periods and packet sizes. We present burstiness bounds in two different settings: First, when flows can be grouped into homogeneous flows; second, when all packets have the same period but with different packet sizes.

%two theorems; the former obtains a bound by grouping flows into  homogeneous sets and combining the bounds obtained for each set, using a convolution bounding technique, which we present in Proposition~\ref{prop:conv}. The latter, for a specific case, when flows have the same period and different packet sizes, provides an alternative bound that may be better when the number of flows per packet size is small.

% \hossein{
	% \begin{enumerate}
		%     \item Theorem 3: Same Period but Different Packet sizes
		%     \item Theorem 4: Different Period and Packet sizes, Convolution Bound (Heterogeneous flows)
		% \end{enumerate}}

Let us first focus on the model where flows are grouped according to their characteristics:
\begin{itemize}
	\item[{\bf (G)}] There exists a partition $I_1, \ldots, I_g$ of $\Nats_n$ such that  $I_i$ is a group of $n_i$ flows satisfying model ${\bf (H)}$ with packet size $\ell_i$ and period $\tau_i$. All phases are mutually independent. 
\end{itemize}

\begin{proposition}[Convolution Bound]\label{prop:conv}
	Let $X_1,X_2,\ldots,X_g$ be $g \geq 1$ mutually independent rv on $\Nats$. Assume that for all $i\in\Nats_n$, $\Psi_i$ is wide-sense increasing and is a lower bound on the CDF of $X_i$, namely, $\forall b\in\Nats$, $\P(X_i \leq b) \geq \Psi_i(b)$. Define $\psi_i$ by $\psi_i(0) = \Psi_i(0)$ and $\psi_i(b)=\Psi_i(b)-\Psi_i(b-1)$ for $b\in \Nats\setminus\{0\}$. 
	
	Then, a lower bound on the CDF of $\sum_{i=1}^gX_i$ is given by: $\forall b\in\Nats$, 
	\begin{equation}\label{eq:convBound}
		\P \Big(\sum_{i=1}^gX_i \leq b \Big) \geq \lp\psi_1*\psi_2*\cdots*\psi_{g-1}*\Psi_{g} \rp(b), %\Psi_{1,2,\ldots,g}(b),
	\end{equation}
	%with  
	%\begin{equation}\label{eq:psi}
%		\Psi_{1,2,\ldots,g}(b) \isdef \lp\psi_1*\psi_2*\ldots*\psi_{g-1}*\Psi_{g} \rp(b), \;\;\; b\in \Nats
%	\end{equation}
	where, the symbol $*$ denotes the discrete convolution, defined for arbitrary functions $f_1,f_2:\Nats \to \mathbb{R}$ by
  \vspace{-0.01cm}
	\begin{equation}\label{eq:discConv}
		\forall b\in\Nats, \quad (f_1* f_2)(b) = \sum_{j=0}^{b} f_1(j) f_2(b-j).
	\end{equation}
\end{proposition}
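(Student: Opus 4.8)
The plan is to prove the statement by induction on the number $g$ of variables, the crux being a two-variable inequality obtained by summation by parts (Abel summation). Before the induction I would record one structural observation that makes the hypothesis self-reproducing: writing $u$ for the discrete Heaviside step $\ind{b\geq 0}$, so that forming the cumulative sum of a sequence is the same as convolving it with $u$, the definition of $\psi_i$ reads $\Psi_i = \psi_i * u$ with $\psi_i \geq 0$ (nonnegativity is exactly the wide-sense monotonicity of $\Psi_i$). Consequently any function of the form $\psi_1 * \cdots * \psi_{k-1} * \Psi_k = (\psi_1 * \cdots * \psi_{k})*u$ is again the cumulative sum of the nonnegative sequence $\psi_1 * \cdots * \psi_{k}$; in particular it is wide-sense increasing and has that sequence as its associated density. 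This is what lets each lower bound produced by the induction be fed back in as a new ``cumulative of a nonnegative density.''

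The key lemma is the case $g=2$: if $X,Y$ are independent on $\Nats$, if $\Psi_X$ is a wide-sense increasing lower bound on the CDF of $X$ with density $\psi_X$, and if $\Psi_Y$ is a wide-sense increasing lower bound on the CDF of $Y$, then $\P(X+Y\leq b)\geq (\psi_X * \Psi_Y)(b)$. To prove it I would start from the exact identity, valid since $X,Y\geq 0$ are independent,
\[
\P(X+Y \leq b) = \sum_{j=0}^{b} \P(X=j)\,\P(Y \leq b-j).
\]
Abbreviating $G(j)=\P(Y\leq b-j)$, the sequence $G$ is nonnegative and wide-sense decreasing in $j$, while the partial sums satisfy $\sum_{j=0}^{k}\P(X=j)=\P(X\leq k)\geq \Psi_X(k)$. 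Summation by parts gives
\[
\sum_{j=0}^{b}\P(X=j)\,G(j) = \P(X\leq b)\,G(b) + \sum_{j=0}^{b-1}\P(X\leq j)\,\bigl(G(j)-G(j+1)\bigr).
\]
Since $G(j)-G(j+1)\geq 0$ and $G(b)\geq 0$, replacing each $\P(X\leq k)$ by the smaller $\Psi_X(k)$ only lowers the right-hand side; reversing the summation by parts turns the result back into $\sum_{j=0}^{b}\psi_X(j)G(j)$, and finally bounding $G(j)=\P(Y\leq b-j)\geq \Psi_Y(b-j)$ with $\psi_X(j)\geq 0$ produces $(\psi_X*\Psi_Y)(b)$.

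With the lemma in hand the induction is short. The base case $g=1$ is precisely the hypothesis $\P(X_1\leq b)\geq \Psi_1(b)$. For the step I would write $\sum_{i=1}^{g}X_i = S_{g-1}+X_g$ with $S_{g-1}=\sum_{i=1}^{g-1}X_i$ independent of $X_g$. By the induction hypothesis $\Phi_{g-1}\isdef \psi_1*\cdots*\psi_{g-2}*\Psi_{g-1}$ is a lower bound on the CDF of $S_{g-1}$, and by the structural observation it is wide-sense increasing with density $\psi_1*\cdots*\psi_{g-1}$. Applying the two-variable lemma with $X=S_{g-1}$ and $Y=X_g$ then yields $\P(S_{g-1}+X_g\leq b)\geq (\psi_1*\cdots*\psi_{g-1}*\Psi_g)(b)$, which is the claim. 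The main obstacle is exactly what the summation by parts resolves: one cannot compare termwise, since $\psi_X(j)$ is \emph{not} a pointwise lower bound on the true density $\P(X=j)$ — only the cumulative sums are ordered. The monotonicity of $G(j)=\P(Y\leq b-j)$ in $j$ is what converts the CDF ordering into the needed inequality, and keeping the boundary term and the signs of the increments straight in the Abel summation is the delicate bookkeeping.
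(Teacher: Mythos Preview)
Your proof is correct and follows essentially the same strategy as the paper: induction on $g$, with the two-variable step handled by Abel summation by parts to convert the ordering of CDFs into the desired convolution inequality. The only cosmetic difference is that the paper peels off $X_1$ and conditions on $Y=\sum_{i\geq 2}X_i$ (so it only ever needs $\psi_1\geq 0$ and the inductive CDF bound on $Y$), whereas you peel off $X_g$ and condition on $S_{g-1}$, which is why you need your preliminary ``structural observation'' that $\psi_1*\cdots*\psi_{g-1}$ is nonnegative.
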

\begin{proof}
	% We first  show the result when $g=2$, then extend it to $g > 2$. 
	We prove it by induction on $g$.
	
	\noindent \textbf{Base Case $g=1$:} There is nothing to prove: for all $b\in\Nats$, $\P(X_1 \leq b) \geq \Psi_1(b)$. 
	
	\noindent \textbf{Induction Case:}
	We now assume that Equation~\eqref{eq:convBound} holds for $g$ variables, and we show that it also holds for $g+1$ variables. 
	
	We can apply Equation~\eqref{eq:convBound} to  variables $X_2, X_3, \ldots, X_{g+1}$, and let us denote $Y = X_2 + \cdots + X_{g+1}$ and  $\Psi=\psi_2*\cdots*\psi_g * \Psi_{g+1}$. 
	We need to show that for all $b\in\Nats$, 
	% $\P( X_1 + X_2 \leq b) \geq \Psi_{1,2}(b)$, for $b\in \Nats$, with
	\begin{equation} \label{eq:conv}
		\P( X_1 + Y \leq b) \geq (\psi_1 * \Psi)(b).
		% =  (\psi_2 * \Psi_1)(b)
	\end{equation}
	
	% we show that $\P( X_1 + X_2 \leq b) \geq \Psi_{1,2}(b) = (\psi_1 * \Psi_2)(b)$.
	% Consider two groups, say $1$ and $2$; we show that
	% $\P( X_1 + X_2 \leq b) \geq \Psi_{1,2}(b)$, for $b\in \Nats$, with
	% \begin{equation} \label{eq:conv}
		%     \Psi_{1,2}(b)= (\psi_1 * \Psi_2)(b) =  (\psi_2 * \Psi_1)(b)
		% \end{equation}
	
	Let $F(b)=\P(Y\leq b)$ and observe that $\P(Y = 0) = F(0)$ and $\P(Y=b)=F(b)-F(b-1)$ for $b\in\Nats\setminus\{0\}$. Then, since $X_1$ and $Y$ are independent, 
	{%\footnotesize
		\begin{align}
			\P(X_1+ Y\leq b)  &=\sum_{j=0}^{b}\P(X_1+j\leq b | Y=j)\P(Y= j) = \sum_{j=0}^{b}\P(X_1+j\leq b )\P(Y= j)\\
			% &=&    \sum_{j=0}^{b}\P(X_1+j\leq b )\P(X_2= j)\\
			%&= \sum_{j=0}^{b}F_1(b-j)\P(X_2= j) 
			&\geq \sum_{j=0}^{b}\Psi_1(b - j)\P(Y= j) 
			\\
			% &\geq&\sum_{j=0}^{b}\Psi_1(b - j)\P(X_2= j) \\
			&\geq\Psi_1(b)F(0) + \sum_{j=1}^{b}\Psi_1(b - j)(F(j)-F(j-1)).
			\label{eq:last}
	\end{align}}
	%
	% \begin{eqnarray}
		%     &=&\sum_{j=0}^{b}\P(X_1+j\leq b | X_2=j)\P(X_2= j) = \sum_{j=0}^{b}\P(X_1+j\leq b )\P(X_2= j)\\
		% % &=&    \sum_{j=0}^{b}\P(X_1+j\leq b )\P(X_2= j)\\
		% &=& \sum_{j=0}^{b}F_1(b-j)\P(X_2= j) \geq \sum_{j=0}^{b}\Psi_1(b - j)\P(X_2= j) 
		% \\
		%     % &\geq&\sum_{j=0}^{b}\Psi_1(b - j)\P(X_2= j) \\
		%     &\geq&\Psi_1(b)F_2(0) + \sum_{j=1}^{b}\Psi_1(b - j)(F_2(j)-F_2(j-1)) 
		%     \label{eq:last}
		% \end{eqnarray}
	% \begin{eqnarray}
		%     \P(X_1+X_2\leq b)&=&\sum_{j=0}^{b}\P(X_1+j\leq b | X_2=j)\P(X_2= j)\\
		% &=&    \sum_{j=0}^{b}\P(X_1+j\leq b )\P(X_2= j)\\
		% &=& \sum_{j=0}^{b}F_1(b-j)\P(X_2= j)
		% \\
		%     &\geq&\sum_{j=0}^{b}\Psi_1(b - j)\P(X_2= j) \\
		%     &\geq&\Psi_1(b)F_2(0) + \sum_{j=1}^{b}\Psi_1(b - j)(F_2(j)-F_2(j-1)) 
		%     \label{eq:last}
		% \end{eqnarray}
	% \begin{align}
		% \P(X_1+X_2\leq b)&=\sum_{j=0}^{b}\P(X_1+j\leq b | X_2=j)\P(X_2= j) \nonumber \
		% &=\sum_{j=0}^{b}\P(X_1+j\leq b)\P(X_2= j) \nonumber \
		% &=\sum_{j=0}^{b}F_1(b-j)\P(X_2=j) \nonumber \
		% &\geq\sum_{j=0}^{b}\Psi_1(b-j)\P(X_2=j) \nonumber \
		% &\geq\Psi_1(b)F_2(0) + \sum_{j=1}^{b}\Psi_1(b-j)(F_2(j)-F_2(j-1)) \label{eq:last}
		% \end{align}
	% \begin{align}
		%         &= \sum_{j=0}^{b}\P(X_1+j\leq b | X_2=j)\P(X_2= j) =    \sum_{j=0}^{b}\P(X_1+j\leq b )\P(X_2= j)\\
		%         &=   \sum_{j=0}^{b}F_1(b-j)\P(X_2= j) \geq \sum_{j=0}^{b}\Psi_1(b - j)\P(X_2= j)\\
		% % &=   \sum_{j=0}^{b}F_1(b-j)\P(X_2= j)
		% % \\
		%     & \geq \Psi_1(b)F_2(0) + \sum_{j=1}^{b}\Psi_1(b - j)(F_2(j)-F_2(j-1))
		%     % \\
		%     % &\geq \Psi_1(b)F_2(0) + \sum_{j=1}^{b}\Psi_1(b - j)(F_2(j)-F_2(j-1)) 
		%     \label{eq:last}
		% \end{align}
	We now use Abel's summation by parts in \eqref{eq:last} and obtain
	\begin{align}
		% \P(X_1+X_2\leq b) 
		\P(X_1+Y\leq b) &\geq\Psi_1(b)F(0) + \sum_{j=1}^{b}\Psi_1(b - j)F(j)- \sum_{j=1}^{b}\Psi_1(b - j)F(j-1)
		\\
		&=
		\Psi_1(b)F(0) + \sum_{j=1}^{b}\Psi_1(b - j)F(j)- \sum_{j=0}^{b-1}\Psi_1(b - j - 1)F(j)
		\\
		&=
		\sum_{j=0}^{b}\Psi_1(b - j)F(j)- \sum_{j=0}^{b-1}\Psi_1(b - j - 1)F(j)
		\\
		&=
		\Psi_1(0)F(b) + \sum_{j=0}^{b-1}(\Psi_1(b - j) - \Psi_1(b - j- 1))F(j)\\
		&=
		\psi_1(0)F(b) + \sum_{j=0}^{b-1}\psi_1(b - j)F(j) = \sum_{j=0}^{b}\psi_1(b - j)F(j)
		\\
		&
		 \geq \sum_{j=0}^{b}\psi_1(b - j)\Psi(j)=  (\psi_1 * \Psi)(b).
	\end{align}
	We can conclude by using the associativity of the discrete convolution: $\psi_1 * \Psi = \psi_1 * \cdots * \psi_g * \Psi_{g+1}$.
	% We now iteratively apply \eqref{eq:conv} and obtain $ \P\lp \sum_{i=1=:g}  X_i \leq b \rp \geq \lp\psi_1*\psi_2*\ldots*\psi_{g-1}*\Psi_{g} \rp(b) =  \Psi_{1,2,\ldots,g}(b)$ for $b\in \Nats$, 
	% % \begin{equation} \label{eq:hs3}
		% %     \Psi_{1,2,\ldots,g}(b) = \lp\psi_1*\psi_2*\ldots*\psi_{g-1}*\Psi_{g} \rp(b), \;\;\; b>0,b\in \Nats
		% % \end{equation}
	% which concludes the proposition.
	\qed
\end{proof}

\noindent {\bf Remarks.} 1. Note that  $(\psi_1 * \Psi_2)(b) = \sum_{i+j \leq b} \psi_1(i) + \psi_2(j) = (\psi_2 * \Psi_1)(b)$, so the convolution bound is independent of the order of $X_1, \ldots, X_g$. 

%Note that $\psi_1 * \Psi_2 = \Psi_1 * \psi_2$ as $ (\psi_1 * \Psi_2)(b) =\sum_{j=0}^{b}\Psi_1(b - j)\Psi_2(j) - \sum_{j=1}^{b}\Psi_1(b - j)\Psi_2(j) = \sum_{i,j s.t. i + j = b, i,j \geq 0 }\Psi_1(i)\Psi_2(j)  +  \sum_{i,j s.t. i + j = b - 1, i,j \geq 0 }\Psi_1(i)\Psi_2(j)$ which is symmetric in indices 1,2. Thus, the convolution bound is independent of the labeling of the variables, i.e.,  $ \psi_1*\psi_2*\ldots*\psi_{g-1}*\Psi_{g} = \psi_1*\psi_2*\ldots*\psi_{g-2}*\psi_{g}*\Psi_{g-1} = \ldots =   \psi_2*\ldots*\psi_{g-1}*\psi_{g}*\Psi_{1} $. 

% in \eqref{eq:convBound}, the index of $\Psi_g$  can be swapped by the index of any $\psi_i$ for $i = 1:g-1$ without affecting the bound, i.e., $ \psi_1*\psi_2*\ldots*\psi_{g-1}*\Psi_{g} = \psi_1*\psi_2*\ldots*\psi_{g-2}*\psi_{g}*\Psi_{g-1} = \ldots =   \psi_2*\ldots*\psi_{g-1}*\psi_{g}*\Psi_{1} $. 

\noindent 2.  An alternative  to Proposition~\ref{prop:conv} is to use then union bound rather than the convolution bound: for all $(b_1, \ldots, b_g) \in \Nats^g$ such that $\sum_{i=1}^g b_i = b$, we have $\big\{ \sum_{i=1}^g X_i  > b \big\} \subseteq \bigcup_{i=1}^g\lc X_i > b_i \rc$, so $\P(X >  b) \leq \sum_{i=1}^g \P(X_i > b_i) \leq \sum_{i=1}^g (1-\Psi_i(b_i))$. We can choose $(b_i)_{i=1}^g$ so as to minimize this latter term, and take the complement to obtain 
\begin{equation}
	\P( \sum_{i=1}^{g}X_i  \leq  b) \geq 1 - \min_{b_1 + \cdots + b_g = b} \sum_{i=1}^g (1-\Psi_i(b_i)).
\end{equation}
This bound is also valid when rvs $X_i$ are not independent, but it can be shown that the convolution bound always dominates the union bound. In our numerical evaluations, we find that the convolution bound provides significantly better results than the union bound. 

\begin{theorem}[Flows with different periods and different packet-sizes] \label{thm:hetero}
	Assume model ${\bf (G)}$.  Let $\varepsilon_i$ be a wide-sense decreasing function that bounds the tail probability of  aggregate burstiness $B_i$ of each group $i \in\Nats_g$: for all $b\in\Nats$,  $ \P\lp B_i > b\rp \leq \varepsilon_i(b)$  for all $b\in \Nats$.
	% $\varepsilon_i(b)$  which is a bound on the tail probability of $B_i$, the aggregate burstiness of group $i$, using Theorem~\ref{thm:homog_dkw}, namely,  $ \P\lp B_i > b\rp \leq \varepsilon_i(b)$  for all $b\in \Nats$.
	Define $\Psi_i(b) = 1 - \varepsilon_i(b)$ for $b\in \Nats$ and define $\psi_i$ by $\psi_i(0) = \Psi_i(0)$ and $\psi_i(b)=\varepsilon_i(b-1)-\varepsilon_i(b)$ for $b\in \Nats \setminus\{0\}$.
	
	Then, a bound on the tail probability of the aggregate burstiness of all flows $B$ is given by $\forall b \in \Nats_{\ltot}$,
	\begin{equation} \label{eq:eps_hetero}
		\P\lp B > b\rp \leq 1 - \lp\psi_1*\psi_2*\cdots*\psi_{g-1}*\Psi_{g} \rp(b),
	\end{equation}
\end{theorem}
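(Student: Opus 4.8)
The plan is to reduce the burstiness $B$ of all flows to a sum of the \emph{per-group} burstinesses $B_i$, which are mutually independent, and then invoke the convolution bound of Proposition~\ref{prop:conv}. The crux is a pathwise subadditivity inequality. Writing $A_i$ for the arrival process of group $I_i$ and $\rtot_i$ for its total rate, so that $A = \sum_{i=1}^g A_i$ and $\rtot = \sum_{i=1}^g \rtot_i$, I would first fix $t$ and use that the supremum of a sum never exceeds the sum of the suprema:
\begin{equation}
\bar{B}(t) = \sup_{s\leq t} \sum_{i=1}^g \lc A_i[s,t) - \rtot_i(t-s)\rc \leq \sum_{i=1}^g \sup_{s\leq t}\lc A_i[s,t) - \rtot_i(t-s)\rc = \sum_{i=1}^g \bar{B}_i(t).
\end{equation}
Taking the supremum over $t\geq 0$ and applying the same inequality a second time gives the key pathwise bound $B \leq \sum_{i=1}^g B_i$, whence $\P(B > b) \leq \P\lp\sum_{i=1}^g B_i > b\rp$ for every $b$.

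Next I would record two facts needed to feed Proposition~\ref{prop:conv}. First, each $B_i$ depends only on the phases $(\phi_f)_{f\in I_i}$, and since model ${\bf (G)}$ makes all phases mutually independent across the disjoint groups, the family $(B_i)_{i\in\Nats_g}$ is mutually independent. Second, because Proposition~\ref{prop:conv} is stated for $\Nats$-valued variables while the $B_i$ are in general not integer-valued, I would pass to $X_i \isdef \lceil B_i\rceil$. The $X_i$ are independent rv on $\Nats$ satisfying $\sum_{i=1}^g B_i \leq \sum_{i=1}^g X_i$, and for every integer $b$ one has $\lc X_i \leq b\rc = \lc B_i \leq b\rc$, so that $\P(X_i \leq b) = \P(B_i \leq b) \geq 1-\varepsilon_i(b) = \Psi_i(b)$. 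Thus $\Psi_i$ is a wide-sense increasing lower bound on the CDF of $X_i$ (increasing since $\varepsilon_i$ is decreasing), and its increments coincide with the $\psi_i$ of the statement because $\Psi_i(b)-\Psi_i(b-1) = \varepsilon_i(b-1)-\varepsilon_i(b)$.

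Finally, applying Proposition~\ref{prop:conv} to $X_1,\ldots,X_g$ yields $\P\lp\sum_{i=1}^g X_i \leq b\rp \geq \lp\psi_1*\cdots*\psi_{g-1}*\Psi_g\rp(b)$, and chaining the inequalities of the previous two paragraphs closes the argument:
\begin{equation}
\P(B > b) \leq \P\lp\sum_{i=1}^g B_i > b\rp \leq \P\lp\sum_{i=1}^g X_i > b\rp = 1 - \P\lp\sum_{i=1}^g X_i \leq b\rp \leq 1 - \lp\psi_1*\cdots*\Psi_g\rp(b).
\end{equation}

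I expect the pathwise subadditivity $B \leq \sum_{i=1}^g B_i$ to be the main obstacle, in the sense that it carries the whole structural content: one must justify separately the two suprema (over the start time $s$ inside $\bar{B}(t)$, and over the observation time $t$) and, crucially, establish the bound \emph{pathwise} rather than only in distribution, since independence of the $B_i$ is what later makes the convolution bound applicable. The remaining care is the reduction to $\Nats$-valued variables, which is needed precisely because the groups may have incommensurable packet sizes; everything after that is the direct application of Proposition~\ref{prop:conv}.
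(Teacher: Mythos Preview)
Your proposal is correct and follows essentially the same route as the paper: the pathwise inequality $B\leq\sum_{i=1}^g B_i$ obtained by subadditivity of the supremum, the passage to the integer-valued $X_i=\lceil B_i\rceil$, and the direct application of Proposition~\ref{prop:conv} using independence of the groups. The only cosmetic difference is that the paper writes the double supremum over $s\leq t$ in one go rather than splitting it into $\bar B(t)$ first and then $\sup_t$, but the argument is identical.
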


\begin{proof} 
% We first show the following equation
% \begin{equation} \label{eq:hs1}
	%     \P(  B \leq b) \geq \P( \sum_{i=1:g} \lceil B_i\rceil \leq b ), \;\;\; b\in \Nats
	% \end{equation} 
% and for $b \in \Nats$,
% \begin{equation}
	%     \P\lp \sum_{i=1}^g\sup_{t \geq 0}B_i(t) > b\rp \leq \P\lp \sum_{i=1}^g \lceil \sup_{t \geq 0}B_i(t) \rceil > b\rp 
	% \end{equation}

For all group $i\in\Nats_g$, let $A^i[s, t)$ be the aggregate of flows of group $i$ during the interval $[s,t)$, $r^i$, its aggregate arrival rate, and $B_i$ its aggregate burstiness. Observe that for all $s\leq t$, $A(s,t] = \sum_{i=1}^g A^i[s,t)$ and $\rtot = \sum_{i=1}^g r^i$. We then obtain 
\begin{align}
B & = \sup_{0\leq s \leq t} \{ A(s,t] - \rtot(t - s) \} =  \sup_{0\leq s \leq t} \Big\{  \sum_{i=1}^g \big( A_i(s,t] - r_i(t - s) \big)  \Big\} \\ 
& \leq  \sum_{i=1}^g \sup_{0\leq s \leq t} \lc   A_i(s,t] - r_i(t - s) \rc = \sum_{i=1}^g B_i\leq  \sum_{i=1}^g\lceil B_i\rceil.
	%     \Bar{B}(t) = \sup_{s \leq t} \{ A(s,t] - \rtot(t - s) \} &=  \sup_{s \leq t} \lc  \sum_{i=1}^g \lp A_i(s,t] - r_i(t - s) \rp \rc\\
	%     &\leq  \sum_{i=1}^g \underbrace{\sup_{s \leq t} \lc   A_i(s,t] - r_i(t - s) \rc}_{\Bar{B}_i(t)}
\end{align}
%     \begin{align}
	%     \Bar{B}(t) = \sup_{s \leq t} \{ A(s,t] - \rtot(t - s) \} &=  \sup_{s \leq t} \lc  \sum_{i=1}^g \lp A_i(s,t] - r_i(t - s) \rp \rc\\
	%     &\leq  \sum_{i=1}^g \underbrace{\sup_{s \leq t} \lc   A_i(s,t] - r_i(t - s) \rc}_{\Bar{B}_i(t)}
	% \end{align}
%     B(t) = \sup_{s \leq t} \{ A(s,t] - \rtot(t - s) \} =  \sup_{s \leq t} \lc  \sum_{i=1}^g \lp A_i(s,t] - r_i(t - s) \rp \rc \leq \sum_{i=1}^g
% \end{equation}
Hence, it follows that $ \P(  B \leq b) \geq \P( \sum_{i=1}^g \lceil B_i\rceil \leq b ), \;\;\; b\in \Nats$ .
% shows \eqref{eq:hs1}.

We now apply Proposition~\ref{prop:conv} with  $X_i = \lceil B_i\rceil$ and $\Psi_i$ as defined in the theorem: it suffices to observe that $(\lceil B_i\rceil)_{i\in\Nats_g}$ are mutually independent rv on $\Nats$; as $\varepsilon_i$ is wide-sense decreasing, $\Psi_i$ is  wide-sense increasing; Hence, by Proposition~\ref{prop:conv}, we obtain that for all $b\in\Nats$, $\P( \sum_{i=1}^{g} \lceil B_i\rceil \leq b) \geq (\psi_1*\psi_2*\ldots*\psi_{g-1}*\Psi_{g})(b)$, which concludes the proof.
% \begin{equation} \label{eq:hs2}
%         \P( \sum_{i=1:g} \lceil B_i\rceil \leq b) \geq \Psi_{1,2,\ldots,g}(b), \;\;\; b\in \Nats
% \end{equation} 
%Combine it with the previous paragraph to conclude the theorem. 
\qed
\end{proof}

We now turn to our second heterogeneous model: when all flows have the same period but different packet sizes. 
\begin{itemize}
	\item[${\bf (P)}$] There exists $\tau>0$ such that $\forall f\in\Nats_n$,~$\tau_f = \tau$; $\ell_1 \geq \ell_2 \geq \cdots \geq \ell_n>0$ and $(\phi_f)_{f\in\Nats_n}$ is a family of iid  uniform rv on $[0, \tau)$.
\end{itemize}

\begin{theorem}[Flows with the same period but different packet sizes] \label{thm:semi_homog}
Assume model ${\bf (P)}$. For all $0 \leq b < \ltot$,  set $\eta \isdef \min \Big\{  \frac{k}{n-1}-\frac{\sum_{j=1}^{k+1}\ell_j}{\ltot},~k\in\Nats_{n-1}, \sum_{j=1}^{k+1}\ell_j   > b\Big\}$. Then 

\begin{enumerate}
% $\P\lp \exists t \geq 0,~ B(t) > b\rp \leq n(1 - \bar{p}(n,l,b)) = \ep(n,l,b)$ with $\bar{p}(n,l,b)$ is computed as \eqref{eq:pbThm1} when replacing $u_1, u_2, \ldots,u_{n-1}$ by $\bar{u}_1, \bar{u}_2, \ldots,\bar{u}_{n-1}$ where $\bar{u}_k = \frac{[ \sum_{j=1}^{k+1}l_{(j)} - b]^+}{\ltot}, \forall k \in \{1, \ldots, n-1 \}$.

\item A bound on the tail probability of the aggregate burstiness of all flows $B$ is
% \begin{equation}\label{eq:epsdkw}
	%      \P\lp \exists t \geq 0,~ B(t) > b\rp \leq n \exp \lp -2(n-1) \lp \frac{\bar{k}}{n-1}-\frac{\Lcum(\bar{k}+1)-b}{\ltot} \rp^2 \rp  = \epdkw(n,l,b)
	% \end{equation}
\begin{equation}
	\P( B > b) \leq n\;\exp {\Big( -2(n-1) \big( \eta +\frac{b}{\ltot} \big)^2 \Big)}.
	% = \epdkw(n,l,b)
\end{equation}
% where $\bar{k} = \argmin_{k\in\lc 1,2,\ldots,n-1\rc, \sum_{j=1}^{k+1}l_{(j)} > b}{ \lc  \frac{k}{n-1}-\frac{\sum_{j=1}^{k+1}l_{(j)}}{\ltot} \rc}$.

\item For all $\varepsilon\in (0, 1)$, for all $n \geq 2$, the violation probability of at most $\varepsilon$, i.e., $\P\lp B > b(n,\ell_1,\ldots ,\ell_n, \varepsilon)\rp \leq \varepsilon$ with
\begin{equation}
	b(n,\ell_1,\ldots ,\ell_n, \varepsilon) \isdef \ltot \Big\lceil \sqrt{\frac{\log\; n - \log\; \varepsilon}{2(n-1)}}  - \eta  \Big\rceil .
\end{equation}

\item A bound on the tail probability of the aggregate burstiness of all groups $B$ is given by
%\begin{equation}
$	\P\lp B > b\rp \leq n(1 - \bar{p}(n,\ell_1,\ldots ,\ell_n,b)) $, where $\bar{p}(n,\ell_1,\ldots ,\ell_n,b)$ is computed as in Equation~\eqref{eq:pbThm1}, where for all $k\in\Nats_{n-1}$, $u_k = \frac{[ \sum_{j=1}^{k+1}\ell_j - b]^+}{\ltot}$.
\end{enumerate}
\end{theorem}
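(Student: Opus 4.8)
The plan is to mirror the three-step development of the homogeneous case (Proposition~\ref{prop:main}, Theorem~\ref{thm:homog_dkw}, Theorem~\ref{thm:homog}), the only genuinely new ingredient being how to absorb the non-uniform packet sizes. As in those proofs, I would normalize $\tau = 1$ (so the aggregate rate is $\rtot = \ltot$ and the bound is independent of $\tau$), write $T_1 < T_2 < \cdots$ for the aggregate arrival times with $T_{j+n} = T_j + 1$, and set $H_{i,j} = S_{i,j} - \ltot(T_j - T_i)$, where $S_{i,j}$ is the total size of the packets arriving $i$-th through $j$-th. By the same periodicity and union-bound argument leading to~\eqref{eq:p1s1}, $\P(B > b) \leq \sum_{i=1}^n \P(B_i > b)$ with $B_i = \max_{i\leq j\leq i+n-1} H_{i,j}$, matching the event $\{B_i > b\} = \bigcup_{j=i}^{i+n-1}\{\cdots\}$ used in Proposition~\ref{prop:main}.

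The key new observation is that any window carrying $k+1 = j-i+1 \leq n$ consecutive packets spans less than one period, hence collects $k+1$ \emph{distinct} flows; since $\ell_1 \geq \cdots \geq \ell_n$, its content obeys $S_{i,j} \leq \sum_{j'=1}^{k+1}\ell_{j'}$, the sum of the $k+1$ largest sizes, a deterministic quantity that does not depend on which flows are actually present. This decouples the random size-to-rank assignment from the random phases: pathwise,
\[
\{H_{i,j} > b\} \subseteq \Big\{ T_j - T_i < \tfrac{\sum_{j'=1}^{k+1}\ell_{j'} - b}{\ltot} \Big\}.
\]
Conditioning on $\phi_{(i)} = f$ and invoking Lemma~\ref{lem:uniform} exactly as in Proposition~\ref{prop:main} — the gaps $(T_{i+k}-T_i)_{k\geq 1}$ become the order statistics of $n-1$ iid uniforms, with $T_i - T_i = 0 = U_{(0)}$ — I obtain $\P(B_i > b \mid \phi_{(i)} = f) \leq \P(E)$ independently of $f$, where $E = \bigcup_{k}\{U_{(k)} < u_k\}$ and $u_k = [\sum_{j'=1}^{k+1}\ell_{j'}-b]^+/\ltot$. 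The law of total probability and the union bound then give $\P(B > b) \leq n\P(E)$, the exact analog of Proposition~\ref{prop:main}.

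From this reduction all three parts follow as in the homogeneous case. For the third part I integrate the joint density~\eqref{eq:orderUniform} over $\bar E = \bigcap_k\{U_{(k)}\geq u_k\}$, reproducing the computation in the proof of Theorem~\ref{thm:homog}, which yields $\P(E) = 1 - \bar{p}$ and hence $\P(B>b)\leq n(1-\bar{p})$. For the first part, following Step~2 of Theorem~\ref{thm:homog_dkw}, I show $E \subseteq E'(\varepsilon)$ (recall~\eqref{eq:Eprime}) with $\varepsilon = \eta + b/\ltot$: this amounts to checking, for every relevant $k$, that $u_k \leq \frac{k}{n-1} - \varepsilon$, which rearranges to $\varepsilon \leq \frac{k}{n-1} - \frac{\sum_{j'=1}^{k+1}\ell_{j'}}{\ltot} + \frac{b}{\ltot}$, and $\eta$ is precisely the minimum of the first two terms over the relevant $k$. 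After dispatching the trivial small-$b$ regime exactly as in Theorem~\ref{thm:homog_dkw}, applying the DKW bound~\eqref{eq:s1DKW} on $\P(E'(\varepsilon))$ (valid once $\varepsilon \geq \sqrt{\log 2/(2(n-1))}$) gives $\P(B>b) \leq n\,\exp(-2(n-1)(\eta + b/\ltot)^2)$. The second part is then obtained by setting this bound equal to $\varepsilon$ and solving for $b$, just as Corollary~\ref{cor:homog_burst} is derived from Theorem~\ref{thm:homog_dkw}.

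The main obstacle is the content bound $S_{i,j}\leq \sum_{j'=1}^{k+1}\ell_{j'}$ together with arguing cleanly that it is what allows Lemma~\ref{lem:uniform} to go through even though the sizes are tied to specific flows: one must establish the pathwise event inclusion \emph{before} conditioning, so that only the gap distribution — not the random size-to-rank assignment — enters the probability. A secondary delicate point is the second part: since $\eta$ as defined depends on $b$ through the constraint $\sum_{j'=1}^{k+1}\ell_{j'} > b$, inverting the bound requires either exploiting the monotonicity of $b \mapsto \eta + b/\ltot$ or evaluating $\eta$ at the target burst, and the ceiling in the formula serves to round the resulting bit-valued burst up conservatively.
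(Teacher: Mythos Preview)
Your proposal is correct and follows essentially the same approach as the paper: adapt Proposition~\ref{prop:main} by bounding the content of any window of $k+1$ consecutive packets by the sum of the $k+1$ largest sizes, $H_{i,j}\leq \sum_{k=1}^{j-i+1}\ell_k - \ltot(T_j-T_i)$, and then reuse verbatim the arguments of Theorem~\ref{thm:homog_dkw}, Corollary~\ref{cor:homog_burst} and Theorem~\ref{thm:homog}. Your write-up is in fact more explicit than the paper's sketch, in particular your verification that $E\subseteq E'(\eta+b/\ltot)$ and your remark that $\eta$ depends on $b$ in part~2 (a subtlety the paper does not discuss).
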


When all flows have the same packet-sizes $\ell$, this is model ${\bf (H)}$ and the bounds provided are exactly the same as in Section~\ref{sec:homogen}. Algorithm~\ref{alg:thm1} can also be used to compute the bound of item 3 if a) line $1$ is replaced by  $m \gets \max\{ k \geq 0~|~\sum_{j=1}^{k+1}\ell_j \leq b\}$ and  b) the values of $u_m$ are adapted in line 4.

\begin{proof}
	The proof is done by adapting Proposition~\ref{prop:main}. Then the proofs of each item follow exactly the steps of Theorems~\ref{thm:homog_dkw}, Corollary~\ref{cor:homog_burst} and Theorem~\ref{thm:homog}. 
%The proof follows similar steps as in the proof of Theorem~\ref{thm:homog_dkw},~Corollary~\ref{cor:homog_burst} and Theorem~\ref{thm:homog}. 
The key difference in Proposition~\ref{prop:main} is the computation of $H_{i,j}$: 
$H_{i,j} \leq \sum_{k=1}^{j - i +1}\ell_k  - \ltot(T_j - T_i)$: we bound this value as if the packets arrived in this arrival where the $j-i+1$ longest ones. 
\qed
\end{proof}

% \textbf{Step 3:} We now apply Step 2, \eqref{eq:conv}, with $\Psi_i$ and $\psi_i$ as defined in the theorem.

% The bound can be applied iteratively to the sum of $n$ independent random variables. 
% \input{proofs}
\section{Numerical Evaluation} \label{sec:num}
\begin{figure}[t]
     \centering
     \begin{subfigure}[b]{0.45\textwidth}
         \centering
         \includegraphics[width=\textwidth]{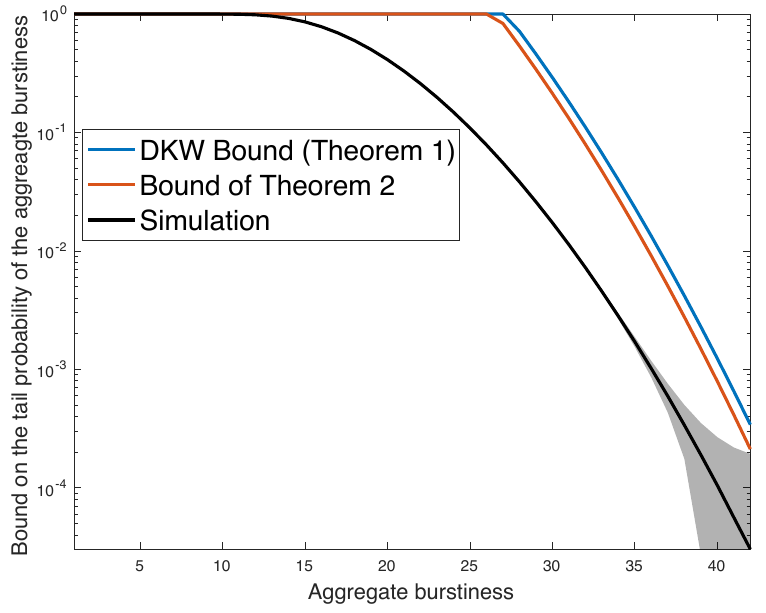}
         \caption{}
         \label{fig:homog_bound}
     \end{subfigure}
     \hfill
     \begin{subfigure}[b]{0.45\textwidth}
         \centering
         \includegraphics[width=\textwidth]{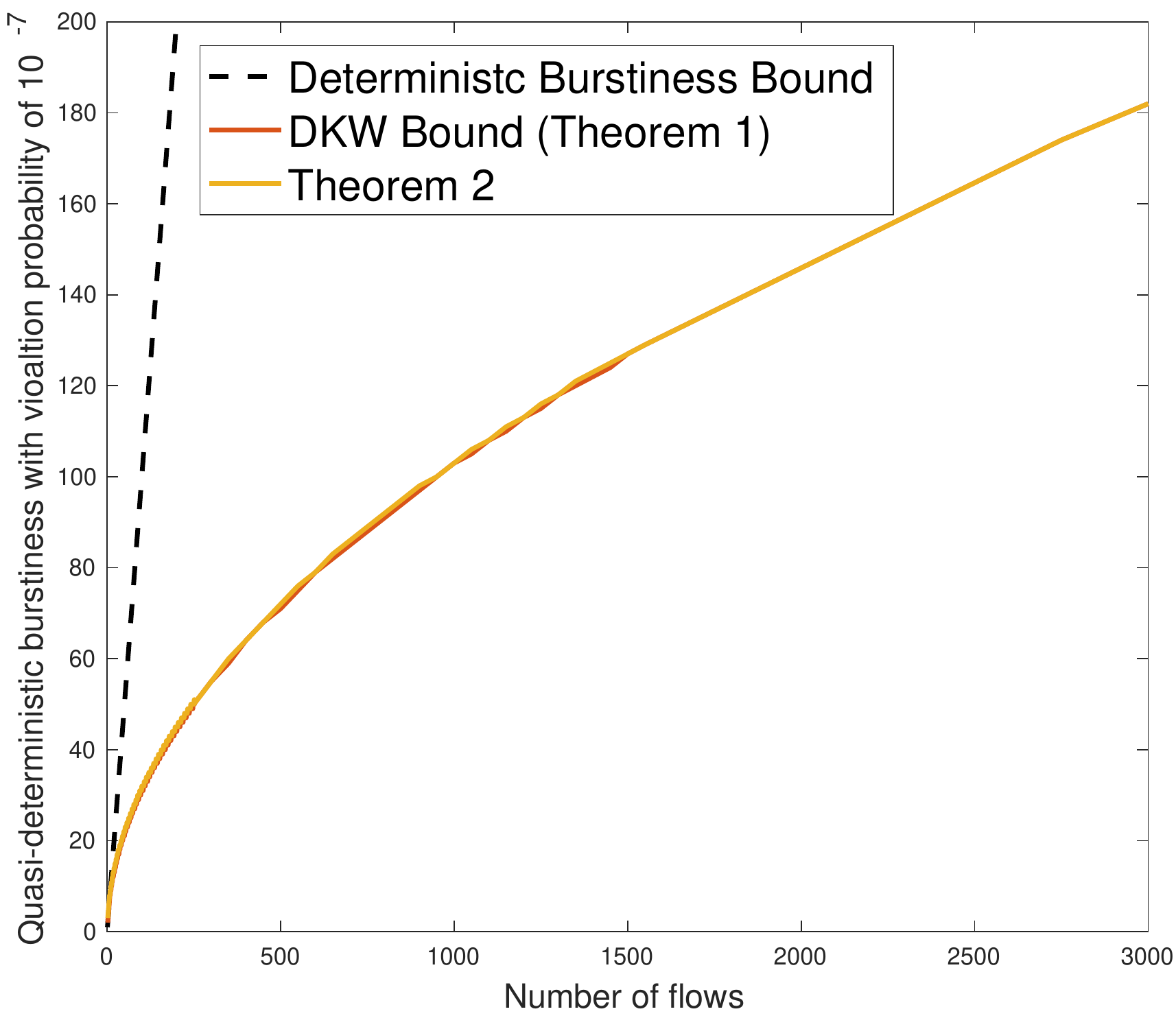}
          \caption{}
         \label{fig:homog_burst}
     \end{subfigure}
        \caption{\sffamily \small (a): Bound on the tail probability of the aggregate burstiness obtained by Theorems~\ref{thm:homog_dkw}, \ref{thm:homog}, and simulations. (b): The obtained quasi-deterministic burstiness with violation probability of $10^{-7}$ by Theorem~\ref{thm:homog_dkw} and Theorem~\ref{thm:homog}, as the number of flows grows; the deterministic bound (dashed plot) grows linearly with the number of flows.}
        \label{fig:homog}
\end{figure}

\begin{figure}[t]
     \centering
     \begin{subfigure}[b]{0.45\textwidth}
         \centering
         \includegraphics[width=\textwidth]{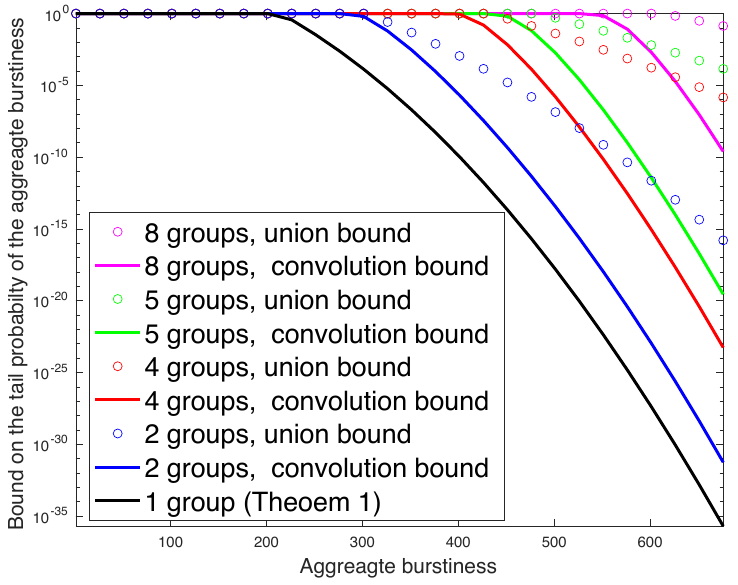}
          \caption{}
         \label{fig:heter_conv_union}
     \end{subfigure}
     \hfill
     \begin{subfigure}[b]{0.45\textwidth}
         \centering
         \includegraphics[width=\textwidth]{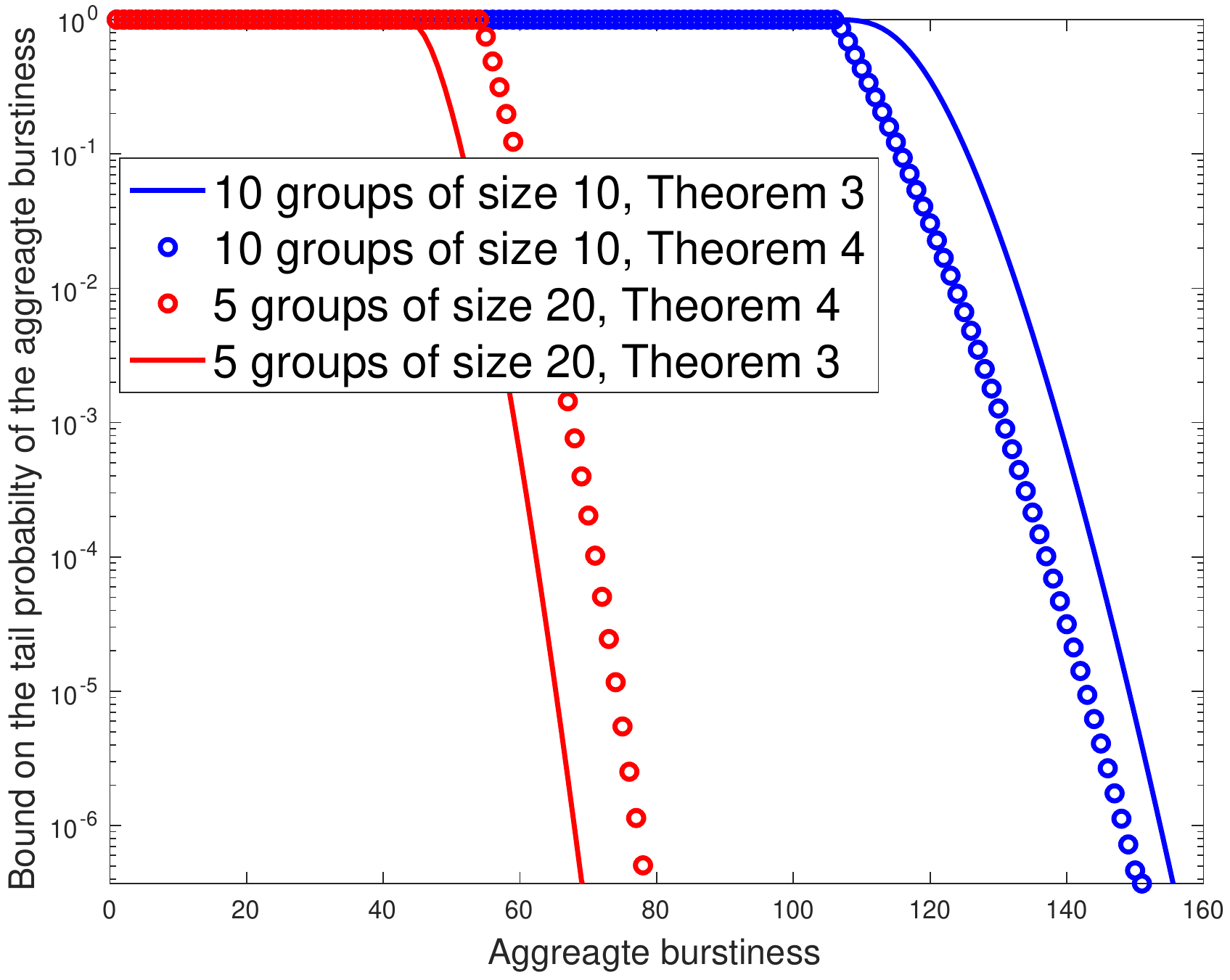}
          \caption{}
         \label{fig:heter_conv_packet}
     \end{subfigure}
        \caption{\sffamily \small (a): Comparison of the convolution bound of Theorem~\ref{thm:hetero} to the union bound when combining bound obtained for homogeneous sets of flows. (b): Slight improvement of Theorem~\ref{thm:semi_homog} compared to Theorem~\ref{thm:hetero} when the number of flows per packet-size is small.}
        \label{fig:heter}
\end{figure}

In this section, we numerically illustrate our bounds in Fig.~\ref{fig:homog} and Fig.~\ref{fig:heter}.

\subsection{Homogeneous Case}
In Fig.~\ref{fig:homog_bound}, we consider $250$ flows with the same packet size  (with respect to a unit, is assumed to be $1$) and the same period. We then compute bounds on the tail probability of their aggregate burstiness using Theorems~\ref{thm:homog_dkw} and \ref{thm:homog}.  We also compute the bound using simulations: For each flow, we independently pick a phase uniformly at random, and we then compute the aggregate burstiness as in \eqref{eq:agg_burst}; we repeat this $10^8$ times. We then compute bounds on the tail probability of their aggregate burstiness and its $99 \%$ Kolmogorov–Smirnov confidence band. The  bound of Theorem~\ref{thm:homog} is slightly better than that of Theorem~\ref{thm:homog_dkw}. Also,  compared to simulations, our bounds are fairly tight. 

In Fig.~\ref{fig:homog_burst}, we consider  $n\in\{2, \ldots, 3000\}$ flows with the packet size 1 and same period. We then compute a quasi-deterministic burstiness bound with violation probability of $10^{-7}$ once using Corollary~\ref{cor:homog_burst} and once using Theorem~\ref{thm:homog}; they are almost equal and  as $n$ grows are exactly equal, as Theorem~\ref{thm:homog_dkw} is as tight as Theorem~\ref{thm:homog} for large $n$. Also, our quasi-deterministic burstiness bound is considerably less than the deterministic one (i.e., $n$) and grows in $\sqrt{n\;\log\; n}$.

\subsection{Heterogeneous Case}
To assess the efficiency of the bound in the heterogeneous case, we consider in Fig.~\ref{fig:heter_conv_union} 10000 homogeneous flows with period and packet length 1, and divide them into $g$ groups of $10000/g$ flows, for $g\in \{1,2,4,5,8\}$. We compute a bound for each group by Theorem~\ref{thm:homog_dkw}, and combine them once with the convolution bound of Theorem~\ref{thm:hetero} and once by the union bound (as explained after Proposition~\ref{prop:conv}). Our convolution bound is significantly better than the union bound, and the differences increases fast with the number of sets.
%increases, the difference becomes more considerable. 

In Fig.~\ref{fig:heter_conv_packet}, we consider  $10$ (resp. $5$) homogeneous groups of $10$ (resp. $20$) flows, flows of each set $g\in\Nats_{10}$ (resp. $g\in\Nats_{5}$), have a packet-size equal to $g$, and all flows have the same period.
% We consider a $100$ flows with the same period but different packet sizes. Specifically, we assume $10$ homogeneous sets of $10$ flows, and flows of each set $g=1:10$, have a packet-size equal to $g$.
We then compute the bound on the tail probability of the aggregate burstiness once with Theorem~\ref{thm:hetero} and once with Theorem~\ref{thm:semi_homog}. When groups are small (here of 10 flows),  Theorem~\ref{thm:semi_homog} provides better bounds than Theorem~\ref{thm:hetero}, but when groups are larger (here of 20 flows), Theorem~\ref{thm:hetero} dominates Theorem~\ref{thm:semi_homog}.

\section{Conclusion}\label{sec:conclusion}
 In this paper, we provided quasi-deterministic bounds on the aggregate burstiness for independent, periodic flows. When a small violation tolerance, is allowed, the bounds are considerably better compared to the deterministic bounds.  We obtained a closed-form expression for the homogeneous case, and for the heterogeneous case, we combined bounds obtained for homogeneous sets using the convolution bounding technique. 
 
 We on purpose limited our study to the burstiness. Quasi-deterministic delay and backlog bounds can be obtained by applying any method from  deterministic network calculus, and combining, either by mean of the union bound or (in case of independence) convolution-like manipulations of the burstiness violation events defined for this paper for all groups of flows. Our results can for example be directly applied to~\cite[Theorem 5]{BN15}, where the model $\sbbThree$ was used to compute probabilistic delay bounds in tandem networks. 

\bibliographystyle{splncs04}
\bibliography{leb,ref}

\end{document}